\documentclass[conference,a4paper]{IEEEtran}
\IEEEoverridecommandlockouts
\addtolength{\topmargin}{9mm}
\usepackage[utf8]{inputenc} 
\usepackage[T1]{fontenc}
\usepackage{url}              
\usepackage{cite}             
\usepackage[cmex10]{amsmath}  
\interdisplaylinepenalty=1000 
\usepackage{mleftright}       
\mleftright                   
\usepackage{booktabs}         
\hyphenation{op-tical net-works semi-conduc-tor}

\usepackage{verbatim}
\usepackage{graphicx}
\usepackage{color, soul}
\usepackage{epstopdf}
\usepackage{subfigure} 
\usepackage{algorithmic}
\usepackage{amsthm}
\usepackage{extarrows}
\usepackage{mathrsfs}
\newtheorem{theorem}{Theorem}

\usepackage{bbm}
\usepackage[backref]{hyperref}
\usepackage{amssymb}

\newtheorem{property}{Property}
\newcommand{\RNum}[1]{\uppercase\expandafter{\romannumeral #1\relax}}
\newtheorem{lemma}{Lemma}

\usepackage{stfloats}

\usepackage{color}
\usepackage{tikz,xcolor,hyperref}

\definecolor{lime}{HTML}{A6CE39}
\allowdisplaybreaks[4]

\makeatletter
\DeclareRobustCommand{\orcidicon}{%
	\begin{tikzpicture}
		\draw[lime, fill=lime] (0,0) 
		circle [radius=0.16] 
		node[white] {{\fontfamily{qag}\selectfont \tiny ID}};    \draw[white, fill=white] (-0.0625,0.095) 
		circle [radius=0.007];    \end{tikzpicture}
	\hspace{-2mm}}
\foreach \x in {A, ..., Z}{%
	\expandafter\xdef\csname orcid\x\endcsname{\noexpand\href{https://orcid.org/\csname orcidauthor\x\endcsname}{\noexpand\orcidicon}}
}

\newcommand*\bigcdot{\mathpalette\bigcdot@{.5}}
\newcommand*\bigcdot@[2]{\mathbin{\vcenter{\hbox{\scalebox{#2}{$\m@th#1\bullet$}}}}}
\makeatother

\usepackage[linesnumbered,ruled,vlined]{algorithm2e}

	
\hyphenation{op-tical net-works semi-conduc-tor}
\raggedbottom
	
\makeatletter
\renewcommand{\maketag@@@}[1]{\hbox{\m@th\normalsize\normalfont#1}}%
\makeatother
\begin{document}
\title{Tight Upper Bounds on the Error Probability of Spinal Codes over Fading Channels}

\author{%
  \IEEEauthorblockN{Xiaomeng~Chen, Aimin~Li and Shaohua~Wu}
  \IEEEauthorblockA{Harbin Institute of Technology (Shenzhen)\\
				    518055 Shenzhen, China\\
				    cxmeng@163.com, liaimin@stu.hit.edu.cn, hitwush@hit.edu.cn}

	\thanks{	
	Aimin Li has contributed equally to this work.  This work has been supported in part by the National Key Research and Development Program of China under Grant no. 2020YFB1806403, and in part by the National Natural Science Foundation of China under Grant nos. 61871147, 62071141, 61831008, 62027802,  and in part by the Shenzhen Municipal Science and Technology Plan under Grant no. GXWD20201230155427003-20200730122528002, and in part by the Major Key Project of PCL under Grant PCL2021A03-1. (Corresponding author: Shaohua Wu.)				
		}
	}				
		
		\maketitle
		\begin{abstract}
			Spinal codes, a family of rateless codes introduced in 2011, have been proved to achieve Shannon capacity over both the additive white Gaussian noise (AWGN) channel and the binary symmetric channel (BSC). In this paper, we derive explicit tight upper bounds on the error probability of Spinal codes under maximum-likelihood (ML) decoding and perfect channel state information (CSI) over three typical fading channels, including the Rayleigh channel, the Nakagami-m channel and the Rician channel. Simulation results verify the tightness of the derived upper bounds.
		\end{abstract}
		\begin{IEEEkeywords}
			Spinal codes, decoding error probability, fading channels, ML decoding. upper bounds.
		\end{IEEEkeywords}
		
		\IEEEpeerreviewmaketitle
		
		\section{Introduction} 
		First proposed in 2011 \cite{2011Spinal}, Spinal codes are a family of rateless codes that have been proved to achieve Shannon capacity over both the AWGN channel and the BSC \cite{balakrishnan2012randomizing}. Possessing the rateless capacity-achieving nature, Spinal codes demonstrate their superiority in bridging a reliable and high-efficiency information delivery pipeline between transceivers under highly dynamic channels. 
		Specifically, the rateless nature allows Spinal codes to automatically adapt to time-varying channel conditions. Unlike fixed-rate codes, which require a specific code rate in advance, rateless codes work by a natural channel adaptation manner: the sender transmits a potentially limitless stream of encoded bits, and the receiver collects bits consecutively until the successful decoding process takes place.
		In \cite{2012Spinal}, Spinal codes have shown advantages in terms of rate performance compared to other state-of-the-art rateless codes 
		under different channel conditions and message sizes. 
		Also, \cite{2012Spinal} notes the similarity between Spinal codes and Trellis Coded Modulation (TCM) \cite{TWC1,TWC2} is superficial because of their differences in nature and encoded popuse.
		
		In coding theory, performance analysis is an intriguing topic. A closed-form expression for the error probability, in general, could not only enable more efficient performance evaluations but also shed light on coding optimization design.
		However, in most cases, it is intractable to obtain a closed-form expression for error probabilities. 
		As an alternative, bounding techniques are usually used to approximate performance \cite{NBLLBC}. 
		Along this avenue, there are already a lot of established bounds for some specific channel codes, such as the advanced tight bounds on Polar codes \cite{polar1,polar2}, the upper and lower bounds on Raptor codes \cite{raptorerror}, and the performance bounds on the LT codes \cite{LTerror1,LTerror2}. 
		And in \cite{CIT-009}, upper and lower bounds on the error probability of linear codes under ML decoding are surveyed.
		For Spinal codes, however, it is still early days to get tight explicit bounds over fading channels. 
		
		To date, there have been a few works that have preliminarily explored the performance analysis of Spinal codes. In \cite{balakrishnan2012randomizing}, Balakrishnan \emph{et al.} analyze the asymptotic rate performance of Spinal codes and theoretically prove that Spinal codes are capacity-achieving over both the AWGN channel and the BSC. In \cite{UEPspinal}, two state-of-the-art results in the finite block length (FBL) regime, \emph{i.e.}, the Random Coding Union (RCU) bound \cite{polyanskiy2010channel} and a variant of Gallager result \cite{gallager1968information} are applied to analyze the error probability of Spinal codes over the AWGN channel and the BSC, respectively. In \cite{li2021new}, the authors further tighten bounds by characterizing the error probability as the volume of a hypersphere divided by the volume of a hypercube, while the analysis is still performed over the AWGN channel. Until now, little has been done in the way of error probability analysis for Spinel codes over fading channels. An exception work is \cite{li2021spinal}, which derived a probability-dependent convergent upper bound over Rayleigh fading channels by adopting the Chernoff bound \cite{chernoff1952measure}. However, $i$) the derived bound over the Rayleigh fading channel in \cite{li2021spinal} is not strictly explicit, since the convergence of the upper bound is probability-dependent; $ii$) Nakagami-m and Rician channels, both common in practical wireless communication scenarios, have not been considered in the available error probability analyses; $iii$) There is not yet an upper bound that achieves uniform tight approximations over a wide range of signal-to-noise ratio (SNR), either over the fading channel or over the AWGN channel.
		
		Motivated by the above, this paper aims to derive tight upper bounds over three typical fading channels, Rayleigh, Nakagami-m and Rician fading channels. Strictly explicit upper bounds are provided, which cover uniform tight error probability approximations over a wide range of SNR.
		
		The paper is organized as follows. In Section \ref{section II}, the encoding process of Spinal codes is given as a priori knowledge. Section \ref{section III} derives upper bounds over Rayleigh, Nakagami-m and
		\begin{figure}
			\centering
			\includegraphics[width=	0.9\linewidth]{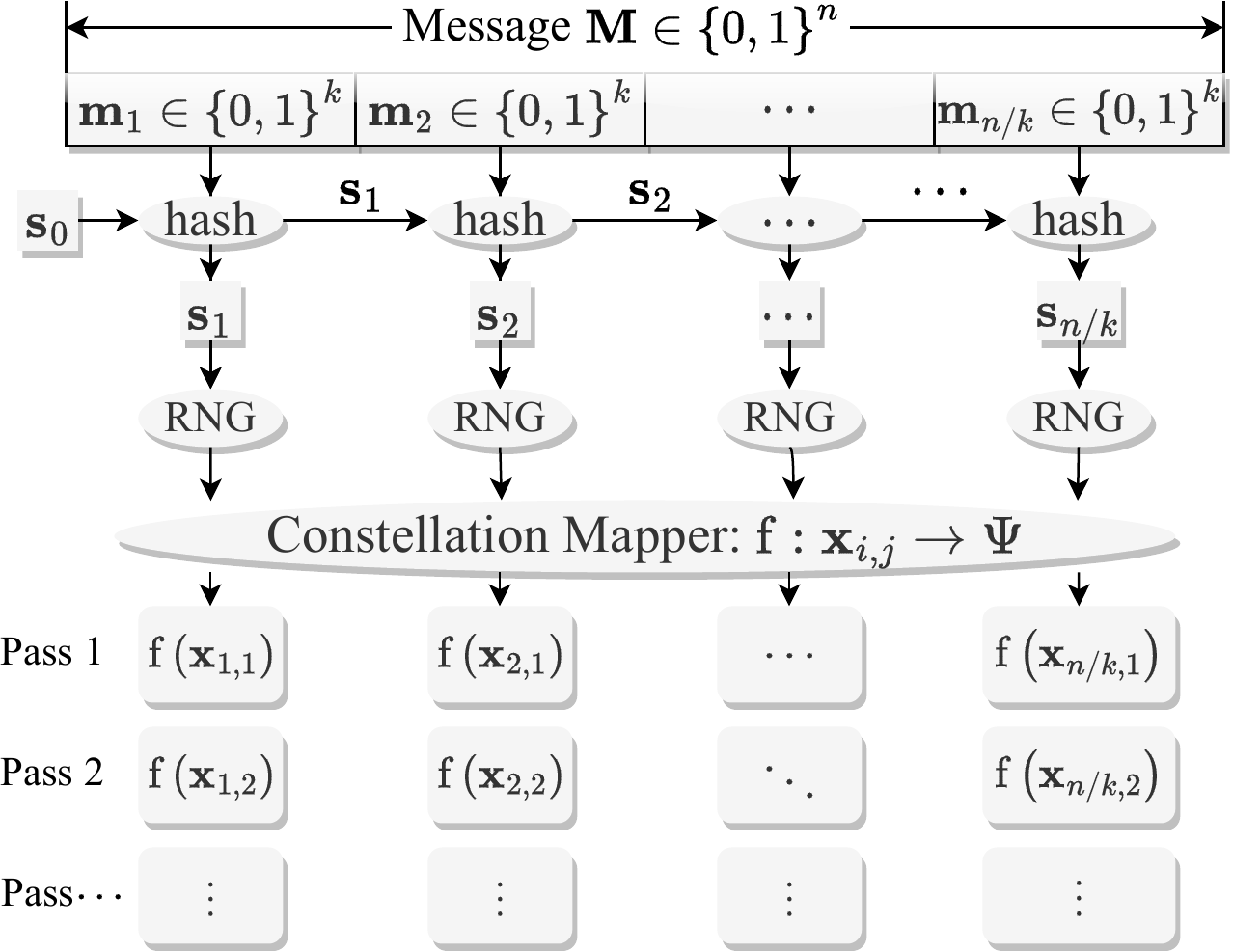}
			\caption{The encoding process of Spinal codes.}
			\label{figure1}
		\end{figure}
		Rician fading channels, respectively. Numerical results comparing the bounds with Monte Carlo simulations are illustrated in Section \ref{sectionIV}. Section \ref{sectionV} presents the conclusions of our work.
		
		\section{Encoding Process of Spinal Codes} \label{section II}
		This section briefly introduces the primary building blocks of Spinal codes, where the combination of a hash function and random number generator (RNG) functions is a key.  
		Fig. \ref{figure1} shows the encoding process of Spinal codes, which comprises five steps:
		\begin{enumerate}
			\item \textbf{Segmentation}: Divide an $n$-bit message $\mathbf{M}$ into $k$-bit segments $\mathbf{m}_i \in \left\{0,1 \right\}^k$, where $i=1,2,\cdots,n/k$.
			\item \textbf{Sequentially Hashing}\footnote{A discussion concerning the properties of the hash function can be found in Appendix \ref{property}.}: The hash function sequentially generates $v$-bit spine values $\mathbf{s}_i \in {\{0,1\}}^v$, with
			\begin{equation} \label{eqhash}
				\mathbf{s}_i = h(\mathbf{s}_{i-1},\mathbf{m}_i),~i=1,2,\cdots,n/k,~\mathbf{s}_0 = \mathbf{0}^v \text{.} 
				\footnote{The initial spine value $\mathbf{s}_0$ is known to both the encoder and the decoder, and is set as $\mathbf{s}_0=0$ in this papser without loss of generality.}
			\end{equation}
			\item \textbf{RNG}: Each spine value $\mathbf{s}_i$ is used to seed an RNG to generate a binary pseudo-random uniform-distributed sequence $\{\mathbf{x}_{i,j}\}_{j \in \mathbb{N}^+}$:
			\begin{equation}
				\mathrm{RNG:} ~\mathbf{s}_i \to \{ \mathbf{x}_{i,j} \},~j=1,2,3,\cdots \text{,}
			\end{equation}
			where $\mathbf{x}_{i,j} \in {\{ 0,1 \}}^c$.
			\item \textbf{Constellation Mapping}: The constellation mapper maps each $c$-bit symbol $\mathbf{x}_{i,j}$ to a channel input set $\Psi$:
			\begin{equation}
				\operatorname{f}: \mathbf{x}_{i,j}\rightarrow \Psi \text{.}
			\end{equation}
			In this paper, $\operatorname{f}$ is a uniform constellation mapping function, \emph{i.e.}, it converts each $c$-bit symbol $\mathbf{x}_{i,j}$ to a decimal symbol $\operatorname{f}(\mathbf{x}_{i,j})$.
			\item \textbf{Rateless Transmission}: The encoder continues to generate and transmit symbols pass by pass until the receiver successfully decodes the message.
		\end{enumerate}
		
		\section{UPPER BOUNDS ON THE ERROR PROBABILITY}\label{section III}				
		Error probability analysis is of primary importance in coding theory.
		Closed-form expressions for error probabilities, however, are intractable in most cases. To address this issue, a commonly adopted alternative approach is to introduce bounding techniques for the error probability approximation. In line with this idea, this section aims to derive tight upper bounds on the error probability for Spinal codes over Rayleigh, Nakagami-m and Rician fading channels, respectively.
		
		\subsection{Upper Bounds on the Rayleigh Fading Channel}
		\begin{theorem} \label{theorem1}
			Consider Spinal codes with message length $n$, segmentation parameter $k$, modulation parameter $c$, and sufficiently large hash parameter $v$ transmitted over a flat slow Rayleigh fading channel with mean square $\Omega$ and AWGN variance $\sigma^2$, the average error probability given perfect channel state information (CSI) under ML decoding for Spinal codes can be upper bounded by
			\begin{equation} \label{eq1}
				\small
				P_e \leq 1-\prod_{a=1}^{n/k} \left(1-\epsilon_a\right) \text{,}
			\end{equation}
			with
			\begin{equation} \label{eq2}
				\epsilon_a = \mathrm{min} \left\{ 1,\left(2^k-1\right)2^{n-ak} \cdot \mathscr{F}_{\text{Rayleigh}}(L_a,\sigma) \right\} \text{,}
			\end{equation}
			\begin{equation} \label{eq3}
				\small
				\mathscr{F}_{\text{Rayleigh}} \left(L_a \text{,} \sigma \right) = \sum_{r=1}^{N} b_r\mathcal{F}_{\text{Rayleigh}}(\theta_r \text{;} \sigma\text{,}L_a)\text{,}
			\end{equation}
			\begin{equation} \label{eq4}
				\small
				\mathcal{F}_{\text{Rayleigh}}(\theta_r\text{;} \sigma\text{,}L_a)= 
				{\left( \sum_{i \in \Psi} \sum_{j \in \Psi} 2^{-2c} \frac{8{\sigma}^2{\sin}^2{\theta}_r}{{\Omega}(i-j)^2+8{\sigma}^2{\sin}^2{\theta}_r} \right)}^{L_a}\text{,}
			\end{equation}
			where $b_r = \frac{\theta_r-\theta_{r-1}}{\pi}$, $L_a = (n/k-a+1)L$, $L$ is the number of transmitted passes, $\theta_r$ is arbitrarily chosen with $\theta_0=0$, $\theta_N=\frac{\pi}{2}$ and $0 < \theta_1 < \theta_2 < \cdots < \theta_{N-1} < \frac{\pi}{2}$, and $N$ represents the number of $\theta$ values which enables the adjustment of accuracy.
		\end{theorem}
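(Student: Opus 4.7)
The plan is to decompose the ML error event by the location of the first message-segment at which a competing message differs from the transmitted one, bound each term by a pairwise-error-probability (PEP) calculation averaged over the random codebook, fading, and noise, and finally combine the per-segment contributions into the product form in (\ref{eq1}). Concretely, fix the transmitted message $m$ and, for $a\in\{1,\dots,n/k\}$, let $E_a$ be the event ``some message whose smallest differing segment is $a$ wins the ML likelihood test against $m$.'' Since every incorrect message belongs to exactly one class, $P_e=\Pr(\bigcup_a E_a)$, and class $a$ contains exactly $(2^k-1)\,2^{n-ak}$ messages (segments $1,\dots,a-1$ are pinned to $m$, segment $a$ admits $2^k-1$ non-matching choices, and segments $a+1,\dots,n/k$ are free). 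Under the ``sufficiently large $v$'' assumption I would model $h$ as a random oracle, so the spine chain of any class-$a$ competitor diverges from that of $m$ at segment $a$ and its $L_a=(n/k-a+1)L$ post-divergence channel symbols are i.i.d.\ uniform over $\Psi$ and independent of every codebook symbol belonging to other classes. This codebook independence across classes is what I would use to justify the product form: given any realization of fading and noise, the events $E_a$ depend on disjoint portions of the random codebook and hence $\Pr(\bigcap_a E_a^c\mid n,h)=\prod_a(1-\Pr(E_a\mid n,h))$; replacing each factor by $1-\epsilon_a$ and taking expectation over $n,h$ should deliver (\ref{eq1}).

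Building $\epsilon_a$ reduces to the fading-averaged PEP between $m$ and a fixed class-$a$ competitor over the $L_a$ post-divergence positions. With perfect CSI the conditional PEP given $\{h_\ell\},\{x_\ell\},\{y_\ell\}$ is $Q\bigl(\sqrt{\sum_\ell h_\ell^2(x_\ell-y_\ell)^2/(4\sigma^2)}\bigr)$. Plugging in Craig's representation $Q(x)=\tfrac{1}{\pi}\int_0^{\pi/2}\exp\bigl(-x^2/(2\sin^2\theta)\bigr)\,d\theta$ turns the tail into a product over $\ell$; averaging each factor against $|h_\ell|^2\sim\mathrm{Exp}(1/\Omega)$ via $\mathbb{E}[e^{-a|h|^2}]=(1+a\Omega)^{-1}$ yields the kernel $\tfrac{8\sigma^2\sin^2\theta}{8\sigma^2\sin^2\theta+\Omega(x_\ell-y_\ell)^2}$; averaging next over independent uniform $x_\ell,y_\ell\in\Psi$ reproduces the double sum inside (\ref{eq4}), and the $L_a$-fold product accounts for independence across positions. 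A union bound over the $(2^k-1)\,2^{n-ak}$ competitors (clipped by $1$) then yields $\epsilon_a$ in (\ref{eq2}). To pass from the integral to $\mathscr{F}_{\text{Rayleigh}}$, I would observe that $\theta\mapsto\tfrac{8\sigma^2\sin^2\theta}{8\sigma^2\sin^2\theta+\Omega(i-j)^2}$ is non-decreasing on $(0,\pi/2]$ (composition of the monotone maps $t\mapsto t/(t+c)$ and $\sin^2\theta$), so $\mathcal{F}_{\text{Rayleigh}}(\cdot;\sigma,L_a)$ is non-decreasing as well, and the right-endpoint Riemann sum with weights $b_r=(\theta_r-\theta_{r-1})/\pi$ upper-bounds $\tfrac{1}{\pi}\int_0^{\pi/2}\mathcal{F}_{\text{Rayleigh}}(\theta;\sigma,L_a)\,d\theta$.

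The most delicate step is justifying the tight product-form combination in (\ref{eq1}) rather than the crude union bound $P_e\le\sum_a\epsilon_a$. Conditional independence of the $E_a$'s over the random codebook is clean under the random-oracle assumption, but taking the expectation over shared noise and fading requires $\mathbb{E}\bigl[\prod_a(1-\Pr(E_a\mid n,h))\bigr]\ge\prod_a(1-\mathbb{E}[\Pr(E_a\mid n,h)])$, a positive-association statement that does not follow for free. My plan is to verify that $\Pr(E_a\mid n,h)$ is monotone in a common scalar ``badness'' statistic of the channel realization---for instance, the instantaneous SNR---so that the complementary factors form an associated family and Harris/FKG supplies the required inequality; as a backup I would instead compare with a sequential (prefix-by-prefix) decoder whose per-segment failures are, conditioned on correct earlier decisions, literally independent and individually dominated by $\epsilon_a$, and whose error probability upper-bounds that of ML.
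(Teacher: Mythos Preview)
Your PEP computation---Craig's form of $Q$, factorisation across the $L_a$ independent post-divergence positions, the Rayleigh moment-generating identity $\mathbb{E}[e^{-a|h|^2}]=(1+a\Omega)^{-1}$, the double average over uniform $(x,y)\in\Psi^2$, the union bound over the $(2^k-1)2^{n-ak}$ competitors, and the right-endpoint Riemann upper bound via monotonicity in $\theta$---coincides with the paper's proof essentially line for line. The paper packages the reduction to $Q(\|\mathbf v\|/2\sigma)$ as a separate lemma proved by two rotation-matrix changes of variables, but that is just the standard coherent-detection fact you invoke directly.

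The one place you and the paper part company is the justification of the product form in (\ref{eq1}). You condition on the noise/fading realisation, use codebook independence to factorise, and then try to push the expectation back inside the product via an FKG/Harris argument. The paper instead writes the exact chain-rule identity
\[
P_e \;=\; 1-\Pr\Bigl(\bigcap_{a}\overline{\mathcal E}_a\Bigr)\;=\;1-\prod_{a}\Bigl[1-\Pr\Bigl(\mathcal E_a\,\Big|\,\bigcap_{i<a}\overline{\mathcal E}_i\Bigr)\Bigr],
\]
which holds with no independence assumption whatsoever, and then replaces each conditional factor by the unconditional quantity $\Pr\bigl(\exists\,\mathbf M'\in\mathcal W_a:\mathscr D(\mathbf M')\le\mathscr D(\mathbf M)\bigr)$, which is then union-bounded to produce $\epsilon_a$. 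This is exactly your ``backup'' sequential-decoder idea phrased at the level of events, and it bypasses FKG entirely: one only needs each conditional probability to be at most $\epsilon_a$, not any positive-association inequality among the complementary events. Your primary FKG route is therefore unnecessarily heavy, and the ``monotone in a single scalar badness statistic'' hypothesis is fragile anyway, since the events $E_a$ depend on overlapping but distinct blocks of the fading/noise vector rather than on one common scalar. I would drop the FKG plan and lead with the chain-rule decomposition; the rest of your argument then matches the paper.
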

		
		\begin{proof}
			Suppose a message $ \mathbf{M}=\left(\mathbf{m}_1,\mathbf{m}_2,\cdots,\mathbf{m}_{n/k}\right)\in {\{0,1\}}^n $ is encoded to $\operatorname{f}\left(\mathbf{x}_{i,j}(\mathbf{M})\right)$ to be transmitted over a flat slow Rayleigh fading channel with an AWGN. The received signal can be expressed as:
			\begin{equation} \label{eq5}
				y_{i,j} = h_{i,j}\operatorname{f}\left(\mathbf{x}_{i,j}(\mathbf{M})\right) + n_{i,j} \text{,}
			\end{equation}
			where $y_{i,j}$ is the corresponding received signal at the receiver, $h_{i,j}$ is the channel fading parameter following Rayleigh distribution with mean square $\Omega$,	and $n_{i,j}$ is the AWGN with variance $\sigma^2$.
			
			ML decoding	selects the one with the lowest decoding cost from the candidate sequence space ${\{0,1\}}^n$. Given received symbols $y_{i,j}$ and perfect CSI $\hat{h}_{i,j}=h_{i,j}$, the ML rule for the Rayleigh fading channel with the AWGN is	
			\begin{equation} \label{eq7}
				\small
				\begin{split}
					\hat{\mathbf{M}} &\in \underset{\mathbf{\bar{M}} \in {\{0,1\}}^n}{\mathrm{arg\,min}} \sum_{i=1}^{n/k} \sum_{j=1}^{L} {(y_{i,j}-h_{i,j}\operatorname{f}(\mathbf{x}_{i,j}(\bar{\mathbf{M}})))}^2 \text{,}
				\end{split}
			\end{equation}
			where $\hat{\mathbf{M}}$ is the decoding result, $\bar{\mathbf{M}}$ is the candidate sequence.
			
			From this perspective, we classify the set of candidate sequences ${\{0,1\}}^n$ into two  subsets: $i$) the correct decoding sequence $\mathbf{M} = \left(\mathbf{m}_1,\mathbf{m}_2,\cdots,\mathbf{m}_{n/k}\right)$; $ii$) wrong decoding sequences $\mathbf{M}' = \left(\mathbf{m}'_1,\mathbf{m}'_2,\cdots,\mathbf{m}'_{n/k}\right) \in \mathcal{W}$, with $\mathcal{W} \triangleq \left\{ \left(\mathbf{m}'_1,\mathbf{m}'_2,\cdots,\mathbf{m}'_{n/k}\right) : \exists 1 \leq i \leq n/k, \mathbf{m}'_i \neq \mathbf{m}_i \right\}$. Denoting the cost of $\mathbf{M}$ as $\mathscr{D}(\mathbf{M})$, it turns out that
			\begin{align} \label{eq8}
				\small
				\begin{split}
					\mathscr{D}(\mathbf{M}) &\triangleq \sum_{i=1}^{n/k}\sum_{j=1}^{L}{\left(y_{i,j}-h_{i,j}\operatorname{f}(\mathbf{x}_{i,j}(\mathbf{M}))\right)}^2 = \sum_{i=1}^{n/k}\sum_{j=1}^{L} n_{i,j}^2 \text{.}
				\end{split}
			\end{align}
			
			Similarly, denote the cost of $\mathbf{M}'$ as $\mathscr{D}(\mathbf{M}')$, given by:
			{
			\small
			\begin{align} \label{eq9}
				\mathscr{D}(\mathbf{M}') \triangleq \sum_{i=1}^{n/k}\sum_{j=1}^{L}{\left(y_{i,j}-h_{i,j}\operatorname{f}(\mathbf{x}_{i,j}(\mathbf{M}'))\right)}^2 \text{.}
			\end{align}
			}
			In the sequal, we attempt to explicitly express the error probability of Spinal codes. Let $\mathcal{E}_a$ be the event that there exists an error in the $a^{th}$ segment, which implies that:
			\begin{enumerate}
				\item The $a^{th}$ segment is different, \emph{i.e.}, $\mathbf{m}_a \neq \mathbf{m}'_a$.
				\item The cost of the wrong decoding sequence is less than the correct one, \emph{i.e.}, $\mathscr{D}(\mathbf{M}') \leq \mathscr{D}(\mathbf{M})$. In this case, the ML decoder will incorrectly choose a certain wrong sequence $\mathbf{M}'\in \mathcal{W}$ as the decoding output.
			\end{enumerate}
			
			Denote $\overline{\mathcal{E}}_a$ as the complement of $\mathcal{E}_a$. The error probability of Spinal codes can be expressed as:
			\begin{equation} \label{eq10}
				\small
				\begin{split}
					P_e &= \mathrm{Pr}\left( \bigcup_{a=1}^{n/k}\mathcal{E}_a \right) = 1 - \mathrm{Pr}\left(\bigcap_{a=1}^{n/k} \overline{\mathcal{E}}_a \right) \\
					&= 1 - \prod_{a=1}^{n/k} \left[ 1 - \mathrm{Pr} \left( \mathcal{E}_a\bigg|\bigcap_{i=1}^{a-1}\overline{\mathcal{E}}_i \right)\right] \text{.}
				\end{split}
			\end{equation}
			
			Thus, to obtain the error probability $P_e$, the remaining issue is to calculate $\mathrm{Pr} \bigg( \mathcal{E}_a\bigg|\bigcap_{i=1}^{a-1}\overline{\mathcal{E}}_i \bigg)$, which is interpreted as the probability that the $a^{th}$ segment is wrong while the previous $(a-1)$ segments are correct. With the defination that $\mathcal{W}_a \triangleq \{\left(\mathbf{m}'_1, \cdots ,\mathbf{m}'_a\right)\text{:} \mathbf{m}'_1=\mathbf{m}_1,\cdots,\mathbf{m}'_{a-1}=\mathbf{m}_{a-1},\mathbf{m}'_a \neq \mathbf{m}_a\} \subseteq \mathcal{W}$, the conditional probability is transformed as:
			\begin{equation} \label{eq11}
				\small
				\mathrm{Pr} \left( \mathcal{E}_a\bigg|\bigcap_{i=1}^{a-1}\overline{\mathcal{E}}_i \right) = \mathrm{Pr} \left( \exists \mathbf{M}'\in \mathcal{W}_a : \mathscr{D}(\mathbf{M}') \leq \mathscr{D}(\mathbf{M}) \right) \text{.}
			\end{equation}
			
			Applying the union bound of probability \cite{boole1847mathematical} yields that
			\begin{equation}
				\begin{split} \label{eq12}
					&\mathrm{Pr} \left( \exists \mathbf{M}'\in \mathcal{W}_a : \mathscr{D}(\mathbf{M}') \leq \mathscr{D}(\mathbf{M}) \right) \\
					&\leq \sum_{\mathbf{M}' \in \mathcal{W}_a} \mathrm{Pr} \left( \mathscr{D}(\mathbf{M}') \leq \mathscr{D}(\mathbf{M}) \right) \text{.}
				\end{split}
			\end{equation}
			
			Given that $\mathscr{D}(\mathbf{M})$ and $\mathscr{D}(\mathbf{M}')$ have been calculated in (\ref{eq8}) and (\ref{eq9}) respectively, the probability $\mathrm{Pr} \left( \mathscr{D}(\mathbf{M}') \leq \mathscr{D}(\mathbf{M}) \right)$ is calculated as follows:
			{
				\small
				\begin{align} \label{eq13}	
					& \mathrm{Pr} \left( \mathscr{D}(\mathbf{M}') \leq \mathscr{D}(\mathbf{M}) \right) \notag \\
					& = \mathrm{Pr}\bigg( \sum_{i=1}^{n/k}\sum_{j=1}^{L}{\left(y_{i,j}-h_{i,j}\operatorname{f}(x_{i,j}(\mathbf{M'}))\right)}^2 \leq \sum_{i=1}^{n/k}\sum_{j=1}^{L} n_{i,j}^2 \bigg) \notag\\
					& \overset{(a)}{=} \mathrm{Pr}\bigg( \sum_{i=a}^{n/k}\sum_{j=1}^{L}{\left(y_{i,j}-h_{i,j}\operatorname{f}(x_{i,j}(\mathbf{M'}))\right)}^2 \leq \sum_{i=a}^{n/k}\sum_{j=1}^{L} n_{i,j}^2 \bigg) \notag \\
					& \overset{(b)}{=} \mathrm{Pr}\bigg( \sum_{i=a}^{n/k}\sum_{j=1}^{L}{\left[h_{i,j}(\operatorname{f}(x_{i,j}(\mathbf{M}))-f(x_{i,j}(\mathbf{M'}))) + n_{i,j}\right]}^2 \notag \\
					& \qquad \qquad \qquad \qquad \leq \sum_{i=a}^{n/k}\sum_{j=1}^{L} n^2_{i,j} \bigg) ,
				\end{align}
			}
			where ($a$) establishes since $\operatorname{f}(\mathbf{x}_{i,j}(\mathbf{M}))=\operatorname{f}(\mathbf{x}_{i,j}(\mathbf{M'}))$ for $1\leq i<a$, which is proved in 
			Appendix \ref{property} by leveraging the property of hash function. ($b$) is obtained by applying (\ref{eq5}).
			
			Define $U_{i,j} \triangleq \operatorname{f}(\mathbf{x}_{i,j}(\mathbf{M}))-\operatorname{f}(\mathbf{x}_{i,j}(\mathbf{M'}))$ and $V_{i,j} \triangleq h_{i,j}U_{i,j}$, and then (\ref{eq13}) can be expanded as				
			\begin{equation} \label{eq15}
				\small
				\begin{split}
					&\mathrm{Pr}\bigg( \sum_{i=a}^{n/k}\sum_{j=1}^{L}{\bigg[\underbrace{h_{i,j}(\operatorname{f}(x_{i,j}(\mathbf{M}))-\operatorname{f}(x_{i,j}(\mathbf{M'})))}_{V_{i,j}} + n_{i,j}\bigg]}^2 \\
					& \qquad \qquad \quad \quad \leq \sum_{i=a}^{n/k}\sum_{j=1}^{L} n^2_{i,j} \bigg) \\
					&= \mathrm{Pr} \left( \sum_{i=a}^{n/k}\sum_{j=1}^{L} V^2_{i,j} + 2\sum_{i=a}^{n/k}\sum_{j=1}^{L} V_{i,j}n_{i,j} \leq 0 \right) \text{.}
				\end{split}
			\end{equation}
			
			Denoting $\mathbf{V}^{L_a}$ as the random row vector composed of random variables $V_{i,j}$ with $a \leq i \leq n/k, 1 \leq j \leq L$, and $\mathbf{N}^{L_a}$ as the random row vector composed of random variables $n_{i,j}$ with $a \leq i \leq n/k, 1 \leq j \leq l_i$, (\ref{eq15}) could be rewritten into a vector form, given as
			\begin{equation} \label{eq16}
				\small
				\begin{split}
					& \mathrm{Pr} \left( \sum_{i=a}^{n/k}\sum_{j=1}^{L} V^2_{i,j} + 2\sum_{i=a}^{n/k}\sum_{j=1}^{L} V_{i,j}n_{i,j} \leq 0 \right) \\
					&= \mathrm{Pr} \left( \mathbf{V}^{L_a} {\left(\mathbf{V}^{L_a} + 2{\mathbf{N}}^{L_a}\right)}^{\mathrm{T}} \leq 0 \right),
				\end{split}
			\end{equation}
			which could be then expanded as
			\begin{equation}\label{integral19}
				\begin{aligned}
					&\int_{\mathbb{R}^{L_a}}\mathrm{Pr} \left( \mathbf{V}^{L_a} {\left(\mathbf{V}^{L_a} + 2{\mathbf{N}}^{L_a}\right)}^{\mathrm{T}} \leq 0 \big| \mathbf{V}^{L_a}=\mathbf{v}^{L_a} \right)\cdot\\ 
					&\quad \quad \quad \quad \quad \quad \quad \quad \quad \quad \quad \quad \mathrm{Pr} \left( \mathbf{V}^{L_a}=\mathbf{v}^{L_a} \right) {~\mathrm{d}\mathbf{v}^{L_a}}\\
					&=  \int_{\mathbb{R}^{L_a}}\mathrm{Pr} \left( \mathbf{v}^{L_a} {\left(\mathbf{v}^{L_a} + 2{\mathbf{N}}^{L_a}\right)}^{\mathrm{T}} \leq 0  \right)\cdot\\ 
					&\quad \quad \quad \quad \quad \quad \quad \quad \quad \quad \quad \quad  \mathrm{Pr} \left( \mathbf{V}^{L_a}=\mathbf{v}^{L_a} \right) {~\mathrm{d}\mathbf{v}^{L_a}}.
				\end{aligned}
			\end{equation}			
			
			With (\ref{integral19}) in hand, the next problem is to explicitly solve $\mathrm{Pr} \left( \mathbf{v}^{L_a} {\left(\mathbf{v}^{L_a} + 2{\mathbf{N}}^{L_a}\right)}^{\mathrm{T}} \leq 0 \right)$ and $\mathrm{Pr} \left( \mathbf{V}^{L_a}=\mathbf{v}^{L_a} \right)$. First, we attempt to simplify $\mathrm{Pr} \left( \mathbf{v}^{L_a} {\left(\mathbf{v}^{L_a} + 2{\mathbf{N}}^{L_a}\right)}^{\mathrm{T}} \leq 0 \right)$. By introducing two rotation matrices for $L_a$-dimensions hyperspace, we obtain Lemma \ref{lemma1} as follows:
			\begin{lemma}\label{lemma1}
				Given that $n_{i,j}$ is i.i.d AWGN with variance $\sigma^2$, the probability in (\ref{integral19}) can be simplified as
				\begin{equation}\label{Qtransformation}
					\small
					\mathrm{Pr} \left( \mathbf{v}^{L_a} {\left(\mathbf{v}^{L_a} + 2{\mathbf{N}}^{L_a}\right)}^{\mathrm{T}} \leq 0 \right)=Q\left( \frac{\left\|\mathbf{v}^{L_a}\right\|}{2\sigma} \right),
				\end{equation}
				where $Q(\cdot)$ represents the $Q$ function, $\left\|\cdot\right\|$ represents the $\ell^2$ norm.
			\end{lemma}
			\begin{proof}
				Please refer to Appendix \ref{rotationmatrix}.
			\end{proof}
			Note that \cite{Qfunction} has shown a transformation of $Q(\cdot)$, which is presented as an exponential form, given as
			\begin{equation} \label{eq33}
				\small
				Q(x) = \frac{1}{\pi} \int_{0}^{\frac{\pi}{2}} \mathrm{exp} \left(\frac{-x^2}{2\mathrm{sin}^2\theta}\right) \mathrm{d}\theta \text{.}
			\end{equation}
			Adopting (\ref{eq33}) into (\ref{Qtransformation}) yields that
			\begin{align} \label{eq34}
				\small
					& \mathrm{Pr} \left( \mathbf{v}^{L_a} {\left(\mathbf{v}^{L_a} + 2{\mathbf{N}}^{L_a}\right)}^{\mathrm{T}} \leq 0 \right) \notag\\
					&= Q\left( \frac{\left\|\mathbf{v}^{L_a}\right\|}{2\sigma} \right) = \frac{1}{\pi} \int_{0}^{\frac{\pi}{2}} \mathrm{exp} \left(\frac{-{\left\|\mathbf{v}^{L_a}\right\|}^2}{8 \sigma^2 \mathrm{sin}^2\theta}\right) \mathrm{d}\theta \text{.}
			\end{align}
			Applying (\ref{eq34}) in (\ref{integral19}) and swapping the integrates, we have
			\begin{equation} \label{eq35}
				\small
				\begin{split}
					&\mathrm{Pr} \left( \mathbf{V}^{L_a} {\left(\mathbf{V}^{L_a} + 2{\mathbf{N}}^{L_a}\right)}^{\mathrm{T}} \leq 0 \right)\\
					&=\frac{1}{\pi}\int_{\mathbb{R}^{L_a}} \int_{0}^{\frac{\pi}{2}} \mathrm{exp} \left(\frac{-{\left\|\mathbf{v}^{L_a}\right\|}^2}{8 \sigma^2 \mathrm{sin}^2\theta}\right) \mathrm{d}\theta \cdot \mathrm{Pr} \left( \mathbf{V}^{L_a}=\mathbf{v}^{L_a} \right) \mathrm{d}\mathbf{v}^{L_a}\\
					&= \frac{1}{\pi}\int_{0}^{\frac{\pi}{2}}\underbrace{\int_{\mathbb{R}^{L_a}}  \mathrm{exp} \left(\frac{-{\left\|\mathbf{v}^{L_a}\right\|}^2}{8 \sigma^2 \mathrm{sin}^2\theta}\right)  \cdot \mathrm{Pr} \left( \mathbf{V}^{L_a}=\mathbf{v}^{L_a} \right) ~\mathrm{d}\mathbf{v}^{L_a}}_{\mathcal{F}_{\text{Rayleigh}(\theta;\sigma,L_a)}} \mathrm{d}\theta\text{.}
				\end{split}
			\end{equation}
		
			For the Rayleigh fading channel, we denote the multiple integrals with respect to $\mathbf{v}^{L_a}$ as $\mathcal{F}_{\text{Rayleigh}}\left(\theta;\sigma,L_a\right)$. By adopting the i.i.d of $V_{i,j}$, given as
			\begin{equation}
				\small
				\Pr\left(\mathbf{V}^{L_a}=\mathbf{v}^{L_a}\right)=\prod_{i=a}^{n/k}\prod_{j=1}^{L}f_{V_{i,j}}\left(v_{i,j}\right),
			\end{equation}
			$\mathcal{F}_{\text{Rayleigh}}\left(\theta;\sigma,L_a\right)$ could be decomposed as
			\begin{equation} \label{eq36}
				\small
				\begin{split}
					& \underbrace{\int_{\mathbb{R}} \cdots \int_{\mathbb{R}}}_{L_a} \mathrm{exp} \left(\frac{-{\left\|\mathbf{v}^{L_a}\right\|}^2}{8 \sigma^2 \mathrm{sin}^2\theta}\right) \prod_{i=a}^{n/k}\prod_{j=1}^{L} f_{V_{i,j}}(v_{i,j}) \prod_{i=a}^{n/k}\prod_{j=1}^{L} \mathrm{d}v_{i,j} \\
					&\overset{(a)}{=} \prod_{i=a}^{n/k}\prod_{j=1}^{L} \int_{\mathbb{R}} \mathrm{exp} \left(\frac{-v_{i,j}^2}{8 \sigma^2 \mathrm{sin}^2\theta}\right) f_{V_{i,j}}(v_{i,j})~\mathrm{d}v_{i,j} \\
					&\overset{(b)}{=} { \left( \int_{\mathbb{R}} {\mathrm{exp}}{\left(-\frac{v_{a,1}^2}{{8}{\sigma^2}{\sin^2}\theta}\right)} f_{V_{a,1}}(v_{a,1}) \mathrm{d}v_{a,1} \right) }^{L_a} \text{,}
				\end{split}
			\end{equation}
			where (a) establishes by adopting
			\begin{equation}
				\small
				\begin{split}
					\exp\left(\frac{-\|\mathbf{v}^{L_a}\|^2}{8\sigma^2 \sin^2 \theta}\right)&=\exp\left(\frac{-\sum_{i=a}^{n/k}\sum_{j=1}^{L}v^2_{i,j}}{8\sigma^2 \sin^2 \theta}\right)\\
					&=\prod_{i=a}^{n/k}\prod_{j=1}^{L}\exp\left(\frac{-v^2_{i,j}}{8\sigma^2 \sin^2 \theta}\right) \text{,}
				\end{split}
			\end{equation}
			and (b) holds for the i.i.d of the random variable $V_{i,j}$. Recall that $V_{i,j} = h_{i,j}U_{i,j}$ where $h_{i,j}$ and $U_{i,j}$ are independent with each other, the integral in (\ref{eq36}) with respect to $v_{a,1}$ could be then transformed to 
			\begin{equation} \label{eqreplace}
				\small
				\begin{split}
					& \int_{\mathbb{R}} {\mathrm{exp}}{\left(-\frac{v_{a,1}^2}{{8}{\sigma^2}{\sin^2}\theta}\right)} f_{{V}_{a,1}}(v_{a,1}) \mathrm{d}v_{a,1} \\
					& \xlongequal{v_{a,1} = hu} \sum_{u} p_{_U}(u) \int_{\mathbb{R}} {\mathrm{exp}}{\left(-\frac{{h^2}{u^2}}{{8}{\sigma^2}{\sin^2}\theta}\right)} g_{_1}(h)\mathrm{d}h \text{,}
				\end{split}
			\end{equation} 
			where $p_{_U}(u)$ is the probability mass function (PMF) of $U_{a,1}$ and $g_{_1}(h)$ is the probability density function (PDF) of $h$. At this point, the right-hand side (RHS) of (\ref{eqreplace}) could be explicitly calculated by the following lemma.
			
			\begin{lemma} \label{lemma2}
				For Rayleigh distribution whose PDF is $g_{_1}(h)$, the RHS of (\ref{eqreplace}) can be calculated by
				\begin{equation} \label{eqq37}
					\small
					\sum_{i \in \Psi} \sum_{j \in \Psi} 2^{-2c} \frac{8{\sigma}^2{\sin}^2{\theta}}{{\Omega}(i-j)^2+8{\sigma}^2{\sin}^2{\theta}} \text{,}
				\end{equation} 
				where $\Psi$ is the channel input set.
			\end{lemma}
			\begin{proof}
				Please refer to Appendix \ref{proofray} for the detailed derivation.
			\end{proof}

			As such, substituting (\ref{eqq37}) back into (\ref{eq36}) turns out that
			\begin{equation} \label{eq37}
				\small
				\mathcal{F}_{\text{Rayleigh}}(\theta\text{;} \sigma\text{,}L_a) =
				{\left( \sum_{i \in \Psi} \sum_{j \in \Psi} 2^{-2c} \frac{8{\sigma}^2{\sin}^2{\theta}}{{\Omega}(i-j)^2+8{\sigma}^2{\sin}^2{\theta}} \right)}^{L_a} \text{.}
			\end{equation}
			With (\ref{eq37}) in hand, the RHS of (\ref{eq12}) is transformed as
			\begin{equation} \label{eq38}
				\small
				\begin{split}
					& \sum_{\mathbf{M}' \in \mathcal{W}_a} \mathrm{Pr} \left( \mathscr{D}\left(\mathbf{M}'\right) \leq \mathscr{D}\left(\mathbf{M}\right) \right) \\
					&= \frac{1}{\pi} \sum_{\mathbf{M}' \in \mathcal{W}_a} \int_{0}^{\frac{\pi}{2}} \mathcal{F}_{\text{Rayleigh}}(\theta\text{;} \sigma\text{,}L_a) \mathrm{d}\theta \text{.}
				\end{split}
			\end{equation}
			
			Note that $\mathcal{F}_{\text{Rayleigh}}(\theta\text{;} \sigma\text{,}L_a)$ is inreasing with $\theta$ for $0 \leq \theta \leq \frac{\pi}{2}$ (see Appendix \ref{monotony} for the detailed proof), so we can arbitrarily choose $N+1$ values of $\theta$ such that $\theta_0=0$, $\theta_N=\frac{\pi}{2}$ and $0 < \theta_1 < \theta_2 < \cdots < \theta_{N-1} < \frac{\pi}{2}$ to explicitly upper bound the error probability. Combining these values back into (\ref{eq38}), we get the inequality as
			\begin{equation} \label{eq39}
				\small
				\begin{split}
					&\sum_{\mathbf{M'} \in \mathcal{W}_a} \mathrm{Pr} \left( \mathscr{D}\left(M'\right) \leq \mathscr{D}\left(M\right) \right) \\
					&= \frac{1}{\pi} \cdot \sum_{\mathbf{M}' \in \mathcal{W}_a} \int_{0}^{\frac{\pi}{2}} \mathcal{F}_{\text{Rayleigh}}(\theta\text{;} \sigma\text{,}L_a) \mathrm{d}\theta \\
					&\le \frac{\left| \mathcal{W}_a \right|}{\pi} \cdot \sum_{r=1}^{N} \int_{\theta_{r-1}}^{\theta_r} \mathcal{F}_{\text{Rayleigh}}(\theta_r\text{;} \sigma\text{,}L_a) \mathrm{d}\theta \\
					&= \left| \mathcal{W}_a \right| \cdot \sum_{r=1}^{N} b_r \mathcal{F}_{\text{Rayleigh}}(\theta_r\text{;} \sigma\text{,}L_a) \text{,}
				\end{split}
			\end{equation}
			where
			\begin{equation} \label{eq40}
				\small
				b_r = \frac{\theta_r-\theta_{r-1}}{\pi} \text{,} \quad\left| \mathcal{W}_a \right| = (2^k-1)2^{n-ak}.
			\end{equation}
		
			Denoting
			\begin{equation} \label{eq41}
				\small
				\mathscr{F}_{\text{Rayleigh}} \left(L_a \text{,} \sigma \right) = \sum_{r=1}^{N} b_r\mathcal{F}_{\text{Rayleigh}}(\theta_r \text{;} \sigma\text{,}L_a)\text{,}
			\end{equation}
			and applying (\ref{eq41}) to (\ref{eq12}) results in the explicit bound.
		\end{proof}
	
		\subsection{Bounds on the Nakagami-m Fading Channel}
		\begin{theorem} \label{theorem2}
			Consider Spinal codes with message length $n$, segmentation parameter $k$, modulation parameter $c$ and sufficiently large hash parameter $v$ transmitted over a flat slow Nakagami-m fading channel with mean square $\Omega$, shape parameter $m$ and AWGN variance $\sigma^2$. The average error probability given perfect CSI under ML decoding for Spinal codes can be upper bounded by
			\begin{align} \label{eq42}
				\small
				P_e \leq 1-\prod_{a=1}^{n/k} \left(1-\epsilon_a\right) \text{,}
			\end{align}
			with	
			\begin{equation} \label{eq43}
				\epsilon_a = \mathrm{min} \left\{ 1,\left(2^k-1\right)2^{n-ak} \cdot \mathscr{F}_{\text{Nakagami}}(L_a,\sigma) \right\} \text{,}
			\end{equation}
			\begin{equation} \label{eq44}
				\mathscr{F}_{\text{Nakagami}} \left(L_a \text{,} \sigma \right) = \sum_{r=1}^{N} b_r\mathcal{F}_{\text{Nakagami}}(\theta_r \text{;} \sigma\text{,}L_a)\text{,}
			\end{equation}
			\begin{equation} \label{eq45}
				\small
				\begin{split}
					& \mathcal{F}_{\text{Nakagami}}(\theta_r\text{;} \sigma\text{,}L_a) \\
					&= {\left( \sum_{i \in \Psi} \sum_{j \in \Psi} 2^{-2c}
						{\left(\frac{8m{\sigma}^2{\sin}^2{\theta}_r}{{\Omega}(i-j)^2+8m{\sigma}^2{\sin}^2{\theta}_r}\right)}^m \right)}^{L_a}\text{,}
				\end{split}
			\end{equation}
			where $b_r = \frac{\theta_r-\theta_{r-1}}{\pi}$, $L_a = (n/k-a+1)L$, $L$ is the number of transmitted passes, $\theta_r$ is arbitrarily chosen with $\theta_0=0$, $\theta_N=\frac{\pi}{2}$ and $0 < \theta_1 < \theta_2 < \cdots < \theta_{N-1} < \frac{\pi}{2}$, and N represents the number of $\theta$ values which enables the adjustment of accuracy.
		\end{theorem}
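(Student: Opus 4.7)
The plan is to mirror the proof of Theorem \ref{theorem1} almost verbatim, observing that the entire scaffolding—the ML decoding rule (\ref{eq7}), the event decomposition (\ref{eq10}), the union bound (\ref{eq12}), the rotation-matrix reduction of Lemma \ref{lemma1}, Craig's exponential form of $Q(\cdot)$ in (\ref{eq33}), and the Fubini swap together with the i.i.d.\ decomposition in (\ref{eq36})—is independent of the distribution of $h_{i,j}$. All of these steps go through unchanged and reduce the task to computing, for a single pair $(a,1)$, the conditional Laplace-type integral on the right-hand side of (\ref{eqreplace}) with $g_{_1}(h)$ replaced by the Nakagami-m PDF.

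The one place the Nakagami assumption enters is the analog of Lemma \ref{lemma2}. Substituting the PDF $g_m(h)=\frac{2m^m}{\Gamma(m)\Omega^m}h^{2m-1}\exp(-mh^2/\Omega)$ and performing the change of variables $t=h^2$, the inner integral reduces to a Gamma integral of the form $\int_0^\infty t^{m-1}\exp(-\alpha t)\,dt$ with $\alpha = u^2/(8\sigma^2\sin^2\theta) + m/\Omega$. Evaluating this and simplifying yields
\begin{equation*}
\int_0^\infty \exp\!\left(-\frac{h^2 u^2}{8\sigma^2\sin^2\theta}\right)g_m(h)\,dh = \left(\frac{8m\sigma^2\sin^2\theta}{\Omega u^2+8m\sigma^2\sin^2\theta}\right)^{\!m},
\end{equation*}
which specializes to the Rayleigh result at $m=1$ and explains both the extra exponent $m$ and the extra factor $m$ inside the fraction in (\ref{eq45}). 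Averaging over the PMF of $U_{a,1}=\operatorname{f}(\mathbf{x}_{a,1}(\mathbf{M}))-\operatorname{f}(\mathbf{x}_{a,1}(\mathbf{M}'))$, which assigns mass $2^{-2c}$ to each pair $(i,j)\in\Psi\times\Psi$ under the uniform constellation mapping, and then raising to the $L_a$-th power via the i.i.d.\ property of $V_{i,j}$ gives $\mathcal{F}_{\text{Nakagami}}(\theta;\sigma,L_a)$ as in (\ref{eq45}).

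From here the remainder of the proof is identical to the Rayleigh case: the pairwise error probability is integrated against $\theta\in[0,\pi/2]$, discretized at the user-chosen nodes $0=\theta_0<\theta_1<\cdots<\theta_N=\pi/2$, bounded above by $\sum_{r=1}^N b_r \mathcal{F}_{\text{Nakagami}}(\theta_r;\sigma,L_a)$, multiplied by the cardinality $|\mathcal{W}_a|=(2^k-1)2^{n-ak}$, clipped at $1$ to yield $\epsilon_a$, and finally combined through the product-of-complements identity (\ref{eq10}) to produce (\ref{eq42}).

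The one nontrivial verification step is the monotonicity of $\mathcal{F}_{\text{Nakagami}}(\theta;\sigma,L_a)$ in $\theta$ on $[0,\pi/2]$, which is required to justify the upper Riemann sum in the discretization. I expect this to be the main—though still mild—obstacle: the argument reduces to showing that each summand $\bigl(8m\sigma^2\sin^2\theta/(\Omega(i-j)^2+8m\sigma^2\sin^2\theta)\bigr)^m$ is nondecreasing in $\theta$, which follows because the inner fraction equals $1/\bigl(1+\Omega(i-j)^2/(8m\sigma^2\sin^2\theta)\bigr)$, a nondecreasing function of $\sin^2\theta$, and the $m$-th power preserves monotonicity for $m>0$. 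The case $i=j$ contributes a constant term, so the sum is nondecreasing and its $L_a$-th power is as well, which legitimately extends the monotonicity argument of Appendix \ref{monotony} to the Nakagami-m setting.
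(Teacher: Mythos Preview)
Your proposal is correct and follows essentially the same route as the paper's proof: both observe that the entire Theorem~\ref{theorem1} scaffolding is channel-independent, reduce the Nakagami-m case to evaluating the single integral $\int_0^\infty e^{-zh^2}g_m(h)\,dh$ via a Gamma-integral substitution to obtain $\bigl(m/(z\Omega+m)\bigr)^m$, and then invoke the same monotonicity-in-$\theta$ argument (the paper in Appendix~\ref{monotony}, you via the equivalent $1/(1+\cdot)$ representation) to justify the upper Riemann sum. The only cosmetic difference is your choice of substitution $t=h^2$ versus the paper's $t=(z+m/\Omega)h^2$, which of course yield the same result.
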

		
		\begin{proof}
			Please refer to Appendix \ref{proofnaka}.
		\end{proof}
		
		
		\subsection{Bounds on the Rician Fading Channel}
		\begin{theorem} \label{theorem3}
			Consider Spinal codes with message length $n$, segmentation parameter $k$, modulation parameter $c$ and sufficiently large hash parameter $v$ transmitted over a flat slow Rician fading channel with mean square $\Omega$ ,shape parameter $K$ and AWGN variance $\sigma^2$. The average error probability given perfect CSI under ML decoding for Spinal codes can be upper bounded by
			\begin{equation} \label{eq52}
				\small
				P_e \leq 1-\prod_{a=1}^{n/k} \left(1-\epsilon_a\right) \text{,}
			\end{equation}
			with
			\begin{equation} \label{eq53}
				\epsilon_a = \mathrm{min} \left\{ 1,\left(2^k-1\right)2^{n-ak} \cdot \mathscr{F}_{\text{Rician}}(L_a,\sigma) \right\} \text{,}
			\end{equation}
			\begin{equation} \label{eq54}
				\mathscr{F}_{\text{Rician}} \left(L_a \text{,} \sigma \right) = \sum_{r=1}^{N} b_r\mathcal{F}_{\text{Rician}}(\theta_r \text{;} \sigma\text{,}L_a)\text{,}
			\end{equation}
			\begin{equation} \label{eq55}
				\small
				\begin{split}
					& \mathcal{F}_{\text{Rician}}(\theta_r\text{;} \sigma\text{,}L_a) \\
					&= \Bigg( \sum_{i \in \Psi} \sum_{j \in \Psi} 2^{-2c} \frac{8(K+1){\sigma}^2{\sin}^2{\theta}_r}{{\Omega}(i-j)^2+8(K+1){\sigma}^2{\sin}^2{\theta}_r} \\
					& \qquad \cdot {\mathrm{exp} \left( \frac{8K(K+1){\sigma}^2{\sin}^2{\theta}_r}{{\Omega}(i-j)^2+8(K+1){\sigma}^2{\sin}^2{\theta}_r} -K\right) \Bigg)}^{L_a}\text{,}
				\end{split}
			\end{equation}
			where $b_r = \frac{\theta_r-\theta_{r-1}}{\pi}$, $L_a = (n/k-a+1)L$, $L$ is the number of transmitted passes, $\theta_r$ is arbitrarily chosen with $\theta_0=0$, $\theta_N=\frac{\pi}{2}$ and $0 < \theta_1 < \theta_2 < \cdots < \theta_{N-1} < \frac{\pi}{2}$, and N represents the number of $\theta$ values which enables the adjustment of accuracy.
		\end{theorem}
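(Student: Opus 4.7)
The plan is to follow the exact same skeleton as the proof of Theorem \ref{theorem1}, since the steps leading up to equation (\ref{eq36}) are completely agnostic to the specific fading distribution: Lemma \ref{lemma1} depends only on the Gaussianity of the noise, the exponential form (\ref{eq33}) of the $Q$-function is universal, the swap of integrals in (\ref{eq35}) is identical, and the factorization in (\ref{eq36}) uses only the i.i.d.\ property of $V_{i,j} = h_{i,j}U_{i,j}$. Thus I would copy this framework verbatim and arrive at the expression
\begin{equation*}
\mathcal{F}_{\text{Rician}}(\theta;\sigma,L_a)=\Biggl(\sum_{u}p_{_U}(u)\int_{0}^{+\infty}\exp\!\left(-\frac{h^{2}u^{2}}{8\sigma^{2}\sin^{2}\theta}\right)g_{_3}(h)\,\mathrm{d}h\Biggr)^{L_a},
\end{equation*}
where now $g_{_3}(h)$ is the Rician PDF $g_{_3}(h)=\frac{2(K+1)h}{\Omega}\exp\!\bigl(-K-\frac{(K+1)h^{2}}{\Omega}\bigr)I_{0}\!\bigl(2h\sqrt{K(K+1)/\Omega}\bigr)$ for $h\ge 0$. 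The remaining task is the analog of Lemma \ref{lemma2}, namely explicitly evaluating this one-dimensional integral over $h$.

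For that evaluation I would set $z=u^{2}/(8\sigma^{2}\sin^{2}\theta)$ so the exponent of the Gaussian factor merges with the $\exp(-(K+1)h^{2}/\Omega)$ factor of the Rician PDF into a single $\exp(-ah^{2})$ with $a=z+(K+1)/\Omega$. Pulling the constant $e^{-K}$ and the prefactor $2(K+1)/\Omega$ out, what remains is $\int_{0}^{\infty}h\,e^{-ah^{2}}I_{0}(bh)\,\mathrm{d}h$ with $b=2\sqrt{K(K+1)/\Omega}$, which is a standard tabulated integral equal to $\frac{1}{2a}\exp(b^{2}/(4a))$. Substituting back for $a$ and $b$ and clearing denominators by multiplying numerator and denominator by $8\sigma^{2}\sin^{2}\theta$ yields
\begin{equation*}
\frac{8(K+1)\sigma^{2}\sin^{2}\theta}{\Omega u^{2}+8(K+1)\sigma^{2}\sin^{2}\theta}\exp\!\left(\frac{8K(K+1)\sigma^{2}\sin^{2}\theta}{\Omega u^{2}+8(K+1)\sigma^{2}\sin^{2}\theta}-K\right),
\end{equation*}
which, after summing over $u=i-j$ with uniform PMF $p_{_U}$ assigning mass $2^{-2c}$ to each pair $(i,j)\in\Psi^{2}$ and raising to the power $L_{a}$, is exactly (\ref{eq55}).

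The main obstacle is really just the correct evaluation of that Bessel-weighted Gaussian integral and the bookkeeping required to recognize the Rician kernel in the standard form $\int_{0}^{\infty}h\,e^{-ah^{2}}I_{0}(bh)\,\mathrm{d}h$; everything else is mechanical. Once (\ref{eq55}) is in place, I would close the proof exactly as in Theorem \ref{theorem1}: verify monotonicity of $\mathcal{F}_{\text{Rician}}(\theta;\sigma,L_{a})$ in $\theta$ over $[0,\pi/2]$ (the argument is analogous to Appendix \ref{monotony} because $\sin^{2}\theta$ appears in the same role and the extra exponential factor is also monotone in $\sin^{2}\theta$), use the right-Riemann-sum upper bound via the partition $0=\theta_{0}<\theta_{1}<\cdots<\theta_{N}=\pi/2$ to obtain (\ref{eq54}), apply $|\mathcal{W}_{a}|=(2^{k}-1)2^{n-ak}$ to produce $\epsilon_{a}$ in (\ref{eq53}), and finally plug into the product form $P_{e}\le 1-\prod_{a=1}^{n/k}(1-\epsilon_{a})$ derived in (\ref{eq10}).
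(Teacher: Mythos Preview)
Your proposal is correct and follows essentially the same approach as the paper: reuse the distribution-agnostic skeleton of Theorem~\ref{theorem1} up through (\ref{eqreplace}), evaluate the single $h$-integral with the Rician PDF, then invoke monotonicity and the right-Riemann-sum bound exactly as before. The only cosmetic difference is that the paper evaluates $\int_{0}^{\infty}h\,e^{-ah^{2}}I_{0}(bh)\,\mathrm{d}h$ by expanding $I_{0}$ into its power series and integrating term by term (equations (\ref{eq58})--(\ref{eq60})), whereas you invoke the closed-form $\frac{1}{2a}e^{b^{2}/(4a)}$ directly; the end result and all remaining steps are identical.
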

		
		\begin{proof}
			Please refer to Appendix \ref{proofrice}.
		\end{proof}
		
		\section{Simulation Result}\label{sectionIV}
		In this section, we conduct Monte Carlo simulations to illustrate the accuracy of the upper bounds derived in Section \ref{section III}. Since the exponential nature of the ML-decoding complexity, the input message bits $n$ is selected as low as $n=8$ and the number of pass is set as $L = 6$ here for the easy ML-decoding simulation setup. The parameter $v$ is set to $v = 32$ for the implementation and experiments, demonstrated by the property \ref{property2} in Appendix \ref{property} that the hash collision will occur only once per $2^{32}$ hash function invocations on average. Furthermore, to improve the accuracy of approximation for upper bounds, we set $N=20$ and the sample size of Monte Carlo simulations to be $10^6$. 
		
		We nomalize the mean square values of all fading channels, \emph{i.e.}, $\Omega = 1$. Besides the setting of $\Omega$, for the Nakagami-m fading channel, we also set the shape parameter as $m=2$. And for Rician fading channels, the shape parameter $K$, defined as the ratio of the power contributions by line-of-sight path to the remaining multipaths, is set as $K=0.5$ and $K=1$, respectively.
		
		\begin{figure}
			\centering
			\includegraphics[width=	0.95\linewidth]{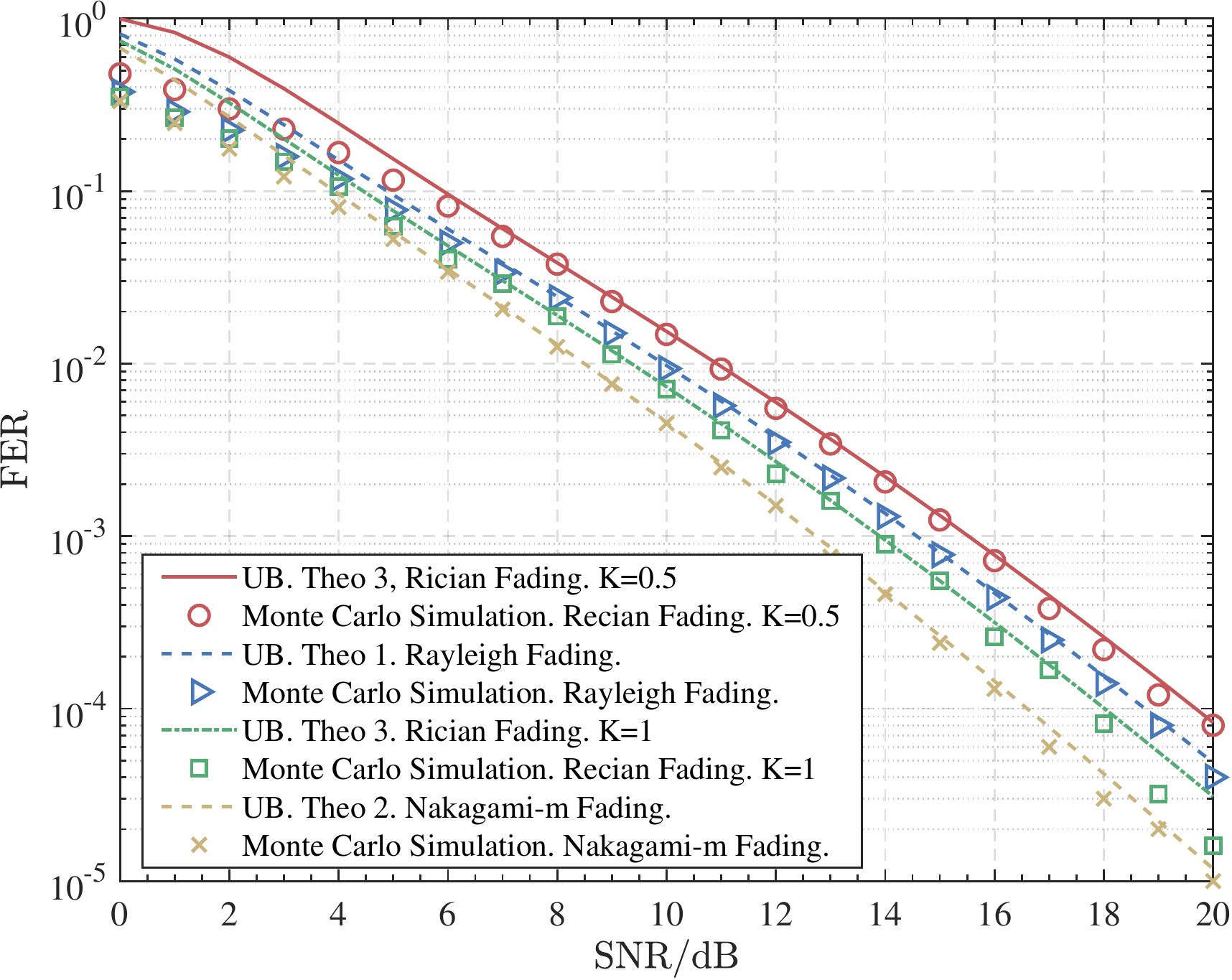}
			\caption{ Upper bounds on the error probability of Spinal codes with $n = 8, v = 32, L = 6, c = 8$ and $k = 2$ over several fading channels with $\Omega = 1$.}
			\label{figure2}
		\end{figure}
		
		Fig.\ref{figure2} shows examples of the error probability for Rayleigh, Nakagami-m and Rician fading channels, respectively. All approximations are close to simulated values. We could observe that the FER curve for $K=1$ is lower than the one for $K=0.5$ in Rician fading channels, which is due to $K$ is the ratio of the energy in the specular path to the energy in the scattered paths, \emph{i.e.}, the larger $K$, the more deterministic the channel.
		
		\section{Conclusion}\label{sectionV}
		This paper analyzes the error probability of Spinal codes and derives upper bounds on the error probability under ML decoding over several fading channels, including the Rayleigh fading channel, the Nakagami-m fading channel and the Rician fading channel. Additionally, we conduct simulations for different fading channels and parameters. Our experimental examples show that the upper bounds we derived have a good performence on the estimation of the average error probablity.
		
		The work in this paper may also inspire innovation about further research and efforts in related topics. The derived upper bounds can provide theoretical support and guidance for designing other high-efficiency coding-associated techniques, such as unequal error protection and concatenation with outer codes.
		
		\bibliographystyle{IEEEtran}
		\newpage
		\bibliography{reference}

\begin{thebibliography}{10}
\providecommand{\url}[1]{#1}
\csname url@samestyle\endcsname
\providecommand{\newblock}{\relax}
\providecommand{\bibinfo}[2]{#2}
\providecommand{\BIBentrySTDinterwordspacing}{\spaceskip=0pt\relax}
\providecommand{\BIBentryALTinterwordstretchfactor}{4}
\providecommand{\BIBentryALTinterwordspacing}{\spaceskip=\fontdimen2\font plus
\BIBentryALTinterwordstretchfactor\fontdimen3\font minus
  \fontdimen4\font\relax}
\providecommand{\BIBforeignlanguage}[2]{{%
\expandafter\ifx\csname l@#1\endcsname\relax
\typeout{** WARNING: IEEEtran.bst: No hyphenation pattern has been}%
\typeout{** loaded for the language `#1'. Using the pattern for}%
\typeout{** the default language instead.}%
\else
\language=\csname l@#1\endcsname
\fi
#2}}
\providecommand{\BIBdecl}{\relax}
\BIBdecl

\bibitem{2011Spinal}
J.~Perry, H.~Balakrishnan, and D.~Shah, ``{Rateless Spinal Codes},''
  \emph{Proceedings of the 10th ACM Workshop on Hot Topics in Networks}, 2011.

\bibitem{balakrishnan2012randomizing}
H.~Balakrishnan, P.~Iannucci, J.~Perry, and D.~Shah, ``{De-randomizing shannon:
  The design and analysis of a capacity-achieving rateless code},'' \emph{arXiv
  preprint arXiv:1206.0418}, 2012.

\bibitem{2012Spinal}
J.~Perry, P.~A. Iannucci, K.~E. Fleming, H.~Balakrishnan, and D.~Shah,
  ``{Spinal Codes},'' \emph{Proceedings of the ACM SIGCOMM 2012 Conference},
  pp. 49–--60, 2012.

\bibitem{TWC1}
G.~Ungerboeck, ``{Channel coding with multilevel/phase signals},'' \emph{IEEE
  Transactions on Information Theory}, vol.~28, no.~1, pp. 55--67, 1982.

\bibitem{TWC2}
G.~Ungerboeck and I.~Csajka, ``{On Improving data-link performance by
  increasing the channel alphabet and introducing sequence coding},''
  \emph{IEEE International Symposium on Information Theory}, 1976.

\bibitem{NBLLBC}
S.~Yousefi and A.~Khandani, ``{A new upper bound on the ML decoding error
  probability of linear binary block codes in AWGN interference},'' \emph{IEEE
  Transactions on Information Theory}, vol.~50, no.~12, pp. 3026--3036, 2004.

\bibitem{polar1}
D.~Goldin and D.~Burshtein, ``{Performance Bounds of Concatenated Polar Coding
  Schemes},'' \emph{IEEE Transactions on Information Theory}, vol.~65, no.~11,
  pp. 7131--7148, 2019.

\bibitem{polar2}
B.~Shuval and I.~Tal, ``{A Lower Bound on the Probability of Error of Polar
  Codes over BMS Channels},'' \emph{IEEE Transactions on Information Theory},
  vol.~65, no.~4, pp. 2021--2045, 2019.

\bibitem{raptorerror}
F.~Lázaro, G.~Liva, G.~Bauch, and E.~Paolini, ``{Bounds on the Error
  Probability of Raptor Codes Under Maximum Likelihood Decoding},'' \emph{IEEE
  Transactions on Information Theory}, vol.~67, no.~3, pp. 1537--1558, 2021.

\bibitem{LTerror1}
B.~Schotsch, G.~Garrammone, and P.~Vary, ``{Analysis of LT Codes over Finite
  Fields under Optimal Erasure Decoding},'' \emph{IEEE Communications Letters},
  vol.~17, no.~9, pp. 1826--1829, 2013.

\bibitem{LTerror2}
B.~Schotsch, \emph{{Rateless coding in the finite length regime}}.\hskip 1em
  plus 0.5em minus 0.4em\relax Hochschulbibliothek der
  Rheinisch-Westf{\"a}lischen Technischen Hochschule Aachen, 2014.

\bibitem{CIT-009}
I.~Sason and S.~Shamai, ``{Performance Analysis of Linear Codes under
  Maximum-Likelihood Decoding: A Tutorial},'' \emph{Foundations and Trends in
  Communications and Information Theory}, vol.~3, no. 1–2, pp. 1--222, 2006.

\bibitem{UEPspinal}
X.~Yu, Y.~Li, W.~Yang, and Y.~Sun, ``{Design and Analysis of Unequal Error
  Protection Rateless Spinal Codes},'' \emph{IEEE Transactions on
  Communications}, vol.~64, no.~11, pp. 4461--4473, 2016.

\bibitem{polyanskiy2010channel}
Y.~Polyanskiy, H.~V. Poor, and S.~Verd{\'u}, ``{Channel coding rate in the
  finite blocklength regime},'' \emph{IEEE Transactions on Information Theory},
  vol.~56, no.~5, pp. 2307--2359, 2010.

\bibitem{gallager1968information}
R.~G. Gallager, \emph{{Information theory and reliable communication}}.\hskip
  1em plus 0.5em minus 0.4em\relax Wiley, 1968, vol. 588.

\bibitem{li2021new}
A.~Li, S.~Wu, J.~Jiao, N.~Zhang, and Q.~Zhang, ``{New Upper Bounds on the Error
  Probability under ML Decoding for Spinal Codes and the Joint
  Transmission-Decoding System Design},'' \emph{arXiv preprint
  arXiv:2101.07953}, 2021.

\bibitem{li2021spinal}
A.~\vspace{0mm}Li, S.~Wu, J.~Jiao, N.~Zhang, and Q.~Zhang, ``{Spinal Codes Over
  Fading Channel: Error Probability Analysis and Encoding Structure
  Improvement},'' \emph{IEEE Transactions on Wireless Communications}, vol.~20,
  no.~12, pp. 8288--8300, 2021.

\bibitem{chernoff1952measure}
H.~Chernoff, ``{A Measure of Asymptotic Efficiency for Tests of a Hypothesis
  Based on the sum of Observations},'' \emph{The Annals of Mathematical
  Statistics}, pp. 493--507, 1952.

\bibitem{boole1847mathematical}
G.~Boole, \emph{{The mathematical analysis of logic}}.\hskip 1em plus 0.5em
  minus 0.4em\relax Philosophical Library, 1847.

\bibitem{Qfunction}
U.~Erez, M.~D. Trott, and G.~W. Wornell, ``{Rateless Coding for Gaussian
  Channels},'' \emph{IEEE Transactions on Information Theory}, vol.~58, no.~2,
  pp. 530--547, 2012.

\bibitem{hashfunction}
M.~Mitzenmacher and E.~Upfal, \emph{{Probability and computing: Randomized
  algorithms and probabilistic analysis}}.\hskip 1em plus 0.5em minus
  0.4em\relax Cambridge University Press, 2005.

\bibitem{Jacobian}
W.~Kaplan, \emph{{Advanced calculus}}.\hskip 1em plus 0.5em minus 0.4em\relax
  Pearson Education India, 1952.

\end{thebibliography}

		\clearpage
		\begin{appendices}
			\section{The properties of hash and related inferences}\label{property}
			The core idea of Spinal codes is to apply a hash function sequentially to generate a pseudo-random mapping from message bits to coded bits.  The hash function is expressed as
			\begin{equation}
				h: \{0,1\}^v \times \{0,1\}^k \rightarrow \{0,1\}^v,
			\end{equation}
			where $v$ and $k$ are both integers.
			In Spinal codes, $h$ is chosen from a pairwise independent family of hash functions $\mathscr{H}$\cite{hashfunction} by uniformly using a random seed.
			\begin{property} \label{property1}
				As Perry et al. indicate \cite{2012Spinal}, the hash function employed by Spinal codes should have pairwise independent property:
				\begin{equation}
					\begin{split}
						& \mathrm{Pr} \{ h(\mathbf{s},\mathbf{m}) = \mathbf{x},h(\mathbf{s}',\mathbf{m}') = \mathbf{x}' \} \\
						&= \mathrm{Pr} \{ h(\mathbf{s},\mathbf{m}) = \mathbf{x}\} \cdot \mathrm{Pr}\{ h(\mathbf{s}',\mathbf{m}') = \mathbf{x}' \} \\
						&= 2^{-2v} \text{,}
					\end{split}
				\end{equation}
				where $(\mathbf{s},\mathbf{m})$ and $(\mathbf{s}',\mathbf{m}')$ are arbitrarily chosen, $(\mathbf{s},\mathbf{m}) \neq (\mathbf{s}',\mathbf{m}')$.
			\end{property}
			
			\begin{property} \label{property2}
				A sufficiently large v will lead to a zero-approaching hash collision probability, with \cite{2012Spinal}
				\begin{equation}
					\begin{split}
						& \mathrm{Pr} \{ h(\mathbf{s},\mathbf{m}) = h(\mathbf{s}',\mathbf{m}') \} \\
						&= \sum_{\mathbf{x} \in {\{ 0,1 \}}^v} \underbrace{\mathrm{Pr} \{ h(\mathbf{s},\mathbf{m}) = \mathbf{x},h(\mathbf{s}',\mathbf{m}') = \mathbf{x}' \}}_{\mathrm{Property} \ref{property1}} \\
						&= 2^v \cdot 2^{-2v} \\
						&= 2^{-v} \approx 0,\ \text{iff}\ v \gg 0 \text{,}
					\end{split}
				\end{equation}
				where $(\mathbf{s},\mathbf{m})$ and $(\mathbf{s}',\mathbf{m}')$ are arbitrarily chosen, $(\mathbf{s},\mathbf{m}) \neq (\mathbf{s}',\mathbf{m}')$.
			\end{property}
			Recall that $\mathbf{M}$ is the correct decoding sequence and $\mathbf{M}' \in \mathcal{W}$ is a certain wrong decoding sequence. We denote the spine values of $\mathbf{M}$ as $s(\mathbf{M}) = (\mathbf{s}_1, \mathbf{s}_2, \cdots, \mathbf{s}_{n/k})$ and the spine values of $\mathbf{M}'$ as $s(\mathbf{M}') = (\mathbf{s}'_1, \mathbf{s}'_2, \cdots, \mathbf{s}'_{n/k})$.
			
			In (\ref{eqhash}), the process of generating $\mathbf{s}_i$ has given that $\mathbf{s}_i = h(\mathbf{s}_{i-1},\mathbf{m}_i)$, $\mathbf{s}_0=\mathbf{0}^v$. So it is natural to get that
			\begin{equation}
				\mathrm{Pr}( \mathbf{s}_i = \mathbf{s}'_i | \mathbf{m}_i = \mathbf{m}'_i, \mathbf{s}_{i-1} = \mathbf{s}'_{i-1} ) = 1\text{,}
			\end{equation}
			for $1 \leq i \leq n/k $.
			
			Conversely, if $\mathbf{m}_i \neq \mathbf{m}'_i$ or $\mathbf{s}_{i-1} \neq \mathbf{s}'_{i-1} $ with a sufficiently large $v$, from Property \ref{property2} we can obtain that $\mathbf{s}_i \neq \mathbf{s}'_i$ for $1 \leq i \leq n/k$.

			When $\mathbf{M}' \in \mathcal{W}_a$, $i.e.$, $\mathbf{m}_i = \mathbf{m}'_i$ for $1 \leq i < a$ and $\mathbf{m}_a \neq \mathbf{m}'_a$,  we can iterate to obtain that $\mathbf{s}_i = \mathbf{s}'_i$ for $1 \leq i < a$ and $\mathbf{s}_i \neq \mathbf{s}'_i$ for $a \leq i \leq n/k$. Hence after RNGs and uniform constellation mapping, the coded symbols satisfy
			\begin{equation} \label{eq(a)}
				\operatorname{f}(\mathbf{x}_{i,j}(\mathbf{M})) = \operatorname{f}(\mathbf{x}_{i,j}(\mathbf{M'})) \text{,}
			\end{equation}
			where $1\leq i <a$.
			
			Attributed to $\mathbf{s}_i \neq \mathbf{s}'_i$ for $a \leq i\leq n/k$,  $\operatorname{f}(\mathbf{x}_{i,j}(\mathbf{M}))$ and $\operatorname{f}(\mathbf{x}_{i,j}(\mathbf{M'}))$ follow the uniform distribution over $[0,2^c-1]$, i.d.d (independent and identically distribution). 
			
			
			\section{Proof of lemma \ref{lemma1}}\label{rotationmatrix}
			To simplify the probability $\mathrm{Pr} \left( \mathbf{v}^{L_a} {\left(\mathbf{v}^{L_a} + 2{\mathbf{N}}^{L_a}\right)}^{\mathrm{T}} \leq 0 \right)$ in (\ref{integral19}), we introduce a rotation matrix for the $L_a$-dimensions hyperspace.
			The rotation matrix is denoted as
			\begin{equation} \label{eq17}
				\mathbf{A} = 
				\begin{bmatrix}
					\mathbf{A}_1 \\
					\vdots \\
					\mathbf{A}_{L_a}
				\end{bmatrix}
				\in \mathbb{R}^{L_a \times L_a} \text{,}
			\end{equation}
			which satisfies
			\begin{equation} \label{eq18}
				\mathbf{A}{\left[\mathbf{v}^{L_a}\right]}^{\mathrm{T}} = {\bigg[ \left\|\mathbf{v}^{L_a}\right\|,\underbrace{0,\cdots,0}_{L_a-1} \bigg]}^{\mathrm{T}} \text{.}
			\end{equation}
			
			As a rotation matrix, $\mathbf{A}$ owns the property that ${\mathbf{A}}^{\mathrm{T}} \mathbf{A} = \mathbf{I}_{L_a}$, which can be used to rewrite the probability as
			\begin{equation} \label{eq19}
				\begin{split}
					& \mathrm{Pr} \left( \mathbf{v}^{L_a} {\left(\mathbf{v}^{L_a} + 2{\mathbf{N}}^{L_a}\right)}^{\mathrm{T}} \leq 0 \right) \\
					&= \mathrm{Pr} \left( \mathbf{v}^{L_a} \mathbf{I}_{L_a} {\left(\mathbf{v}^{L_a} + 2{\mathbf{N}}^{L_a}\right)}^{\mathrm{T}} \leq 0 \right) \\
					&= \mathrm{Pr} \left( \mathbf{v}^{L_a} {\mathbf{A}}^{\mathrm{T}} \mathbf{A} {\left(\mathbf{v}^{L_a} + 2{\mathbf{N}}^{L_a}\right)}^{\mathrm{T}} \leq 0 \right) \\
					&= \mathrm{Pr} \left( {\left[ \mathbf{A}{\left[\mathbf{v}^{L_a}\right]}^{\mathrm{T}} \right]}^{\mathrm{T}} \left( \mathbf{A}{\left[\mathbf{v}^{L_a}\right]}^{\mathrm{T}} + 2\mathbf{A}{\left[\mathbf{N}^{L_a}\right]}^{\mathrm{T}} \right) \leq 0 \right) \text{.}
				\end{split}
			\end{equation}
			
			Applying (\ref{eq17}) and (\ref{eq18}) in (\ref{eq19}) results in
			\begin{equation} \label{eq20}
				\begin{split}
					&\mathrm{Pr} \left( {\left\| \mathbf{v}^{L_a} \right\|}^2 + 2\left\| \mathbf{v}^{L_a} \right\| \cdot \mathbf{A}_1 {\left[\mathbf{N}^{L_a}\right]}^{\mathrm{T}} \leq 0 \right) \\
					&= \mathrm{Pr} \left( \mathbf{A}_1 {\left[\mathbf{N}^{L_a}\right]}^{\mathrm{T}} \leq -\frac{\left\| \mathbf{v}^{L_a} \right\|}{2} \right) \text{,}
				\end{split}
			\end{equation}
			the RHS of which is given by the following equivalent form:
			\begin{equation} \label{eq21}
				\begin{split}
					&\mathrm{Pr} \left( \mathbf{A}_1 {\left[\mathbf{N}^{L_a}\right]}^{\mathrm{T}} \leq -\frac{\left\| \mathbf{v}^{L_a} \right\|}{2} \right) \\
					&= \overbrace{\int \cdots \int}^{L_a}_{_{\mathbf{A}_1 {\left[\mathbf{N}^{L_a}\right]}^{\mathrm{T}} \leq -\frac{\left\| \mathbf{v}^{L_a} \right\|}{2}}} \frac{1}{{\left(\sqrt{2\pi}\sigma\right)}^{L_a}} \mathrm{e}^{-\frac{{\left\| \mathbf{N}^{L_a} \right\|}^2}{2\sigma^2}} \prod_{i=a}^{n/k} \prod_{j=1}^{L} \mathrm{d}n_{i,j} \text{.}
				\end{split}
			\end{equation}
			
			In order to further simplify the $L_a$-fold integral above, we introduce another rotation matrix $\mathbf{B}$, which is denoted as
			\begin{equation} \label{eq22}
				\mathbf{B} = 
				\begin{bmatrix}
					\mathbf{B}_1 \\
					\vdots \\
					\mathbf{B}_{L_a}
				\end{bmatrix}
				\in \mathbb{R}^{L_a \times L_a} \text{,}
			\end{equation}
			such that
			\begin{equation} \label{eq23}
				\mathbf{B}{\left[\mathbf{N}^{L_a}\right]}^{\mathrm{T}} = {\bigg[ \mathbf{A}_1{\left[\mathbf{N}^{L_a}\right]}^{\mathrm{T}},\underbrace{0,\cdots,0}_{L_a-1} \bigg]}^{\mathrm{T}} \text{,}
			\end{equation}
			and that
			\begin{equation} \label{eq24}
				{\left[\mathbf{N}^{L_a}\right]}^{\mathrm{T}} = \mathbf{B}^{\mathrm{T}}{\left[\mathbf{n}^{L_a}\right]}^{\mathrm{T}} \text{,}
			\end{equation}
			where $\mathbf{n}^{L_a}$ is the transformed vector composed of $n'_{i,j}$ with $a \leq i \leq n/k, 1 \leq j \leq L$.
			
			From (\ref{eq23}) we can get that
			\begin{equation} \label{eq25}
				\mathbf{B}_1{\left[\mathbf{N}^{L_a}\right]}^{\mathrm{T}} = \mathbf{A}_1 {\left[\mathbf{N}^{L_a}\right]}^{\mathrm{T}} \text{.}
			\end{equation}
			
			Substitute (\ref{eq24}) into (\ref{eq25}), we have that
			\begin{align} \label{eq30}
					&\mathbf{A}_1 {\left[\mathbf{N}^{L_a}\right]}^{\mathrm{T}} = \mathbf{B}_1{\left[\mathbf{N}^{L_a}\right]}^{\mathrm{T}} \notag\\
					&= 
					\mathbf{B}_1 \mathbf{B}^{\mathrm{T}} {\left[\mathbf{n}^{L_a}\right]}^{\mathrm{T}} \notag\\
					&= 
					\begin{bmatrix}
						\mathbf{B}_1 \mathbf{B}_1^{\mathrm{T}} & \mathbf{B}_1 \mathbf{B}_2^{\mathrm{T}} & \cdots & \mathbf{B}_1 \mathbf{B}_{L_a}^{\mathrm{T}} 
					\end{bmatrix} 
					{\left[\mathbf{n}^{L_a}\right]}^{\mathrm{T}} \notag\\
					& \overset{(c)}{=} \left[ 1, \underbrace{0,\cdots,0}_{L_a-1} \right] {\left[\mathbf{n}^{L_a}\right]}^{\mathrm{T}} = n'_{a,1} \text{,}
			\end{align}
			where (c) is obtained from the property that rotation matrix fulfills $\mathbf{B}{\mathbf{B}}^{\mathrm{T}} = \mathbf{I}_{L_a}$. 
			
			From (\ref{eq24}) we can also derive that
			\begin{equation} \label{eqn}
				\begin{split}
					{\left\| \mathbf{N}^{L_a} \right\|}^2 &= {\left[\mathbf{N}^{L_a}\right]}^{\mathrm{T}}\mathbf{N}^{L_a}\\
					&= \mathbf{n}^{L_a}\mathbf{B}\mathbf{B}^{\mathrm{T}}{\left[\mathbf{n}^{L_a}\right]}^{\mathrm{T}} \overset{(c)}{=} {\left\| \mathbf{n}^{L_a} \right\|}^2 \text{,}
				\end{split}
			\end{equation}
			where (c) establishes by adopting $\mathbf{B} {\mathbf{B}}^{\mathrm{T}} = \mathbf{I}_{L_a}$.		
			
			Together with (\ref{eq30}) and (\ref{eqn}), (\ref{eq21}) can be rewritten as \cite{Jacobian}
			\begin{equation} \label{eq26}
				\small
				\begin{split}
					&\overbrace{\int \cdots \int}^{L_a}_{\mathbf{A}_1 {\left[\mathbf{N}^{L_a}\right]}^{\mathrm{T}} \leq -\frac{\left\| \mathbf{v}^{L_a} \right\|}{2}} \frac{1}{{\left(\sqrt{2\pi}\sigma\right)}^{L_a}} \mathrm{e}^{-\frac{{\left\| \mathbf{N}^{L_a} \right\|}^2}{2\sigma^2}} \prod_{i=a}^{n/k} \prod_{j=1}^{L} \mathrm{d}n_{i,j} \\
					&= \overbrace{\int \cdots \int}^{L_a}_{n'_{a,1} \leq -\frac{\left\| \mathbf{v}^{L_a} \right\|}{2}} \frac{1}{{\left(\sqrt{2\pi}\sigma\right)}^{L_a}} \mathrm{e}^{-\frac{ {\left\| \mathbf{n}^{L_a} \right\|}^2 }{2\sigma^2}} \left| \mathbf{J}\left(\mathbf{n}^{L_a}\right) \right|\prod_{i=a}^{n/k} \prod_{j=1}^{L} \mathrm{d}n'_{i,j} \text{,}
				\end{split}
			\end{equation}
			in which $\mathbf{J}\left(\mathbf{n}^{L_a}\right)$ is the Jacobi matrix, behaving as
			\begin{equation} \label{eq27}
				\mathbf{J}\left(\mathbf{n}^{L_a}\right) = 
				\begin{bmatrix}
					\frac{\partial n'_{a,1}}{n_{a,1}} & \cdots & \frac{\partial n'_{a,1}}{n_{n/k,L}} \\
					\vdots & \ddots & \vdots \\
					\frac{\partial n'_{n/k,L}}{n_{a,1}} & \cdots & \frac{\partial n'_{n/k,L}}{n_{n/k,L}}
				\end{bmatrix} \text{.}
			\end{equation}
			
			Since ${\left[\mathbf{N}^{L_a}\right]}^{\mathrm{T}} = \mathbf{B}^{\mathrm{T}}{\left[\mathbf{n}^{L_a}\right]}^{\mathrm{T}}$, we get ${\left[\mathbf{n}^{L_a}\right]}^{\mathrm{T}} = \mathbf{B}{\left[\mathbf{N}^{L_a}\right]}^{\mathrm{T}}$, therefore the Jacobi matrix behaves as
			\begin{equation} \label{eq28}
				\mathbf{J}\left(\mathbf{n}^{L_a}\right) = \mathbf{B} \text{.}
			\end{equation}
			
			Recalling that $\mathbf{B} {\mathbf{B}}^{\mathrm{T}} = \mathbf{I}_{L_a}$, we have 
			\begin{equation} \label{eq29}
				\begin{split}
					\left| \mathbf{J}\left(\mathbf{n}^{L_a}\right) \right| &= \sqrt{{\left| \mathbf{J}\left(\mathbf{n}^{L_a}\right) \right|}^2} \\
					&= \sqrt{{\left|\mathbf{B}\right|}^2} = \sqrt{\left| \mathbf{B}{\mathbf{B}}^{\mathrm{T}} \right|} = 1 \text{.}
				\end{split}
			\end{equation}
			
			Then, by substituting (\ref{eq29}) into (\ref{eq26}), we could verify that
			\begin{equation} \label{eq31}
				\small
				\begin{split}
					&\overbrace{\int \cdots \int}^{L_a}_{n'_{a,1} \leq -\frac{\left\| \mathbf{v}^{L_a} \right\|}{2}} \frac{1}{{\left(\sqrt{2\pi}\sigma\right)}^{L_a}} \mathrm{e}^{-\frac{ {\left\| \mathbf{n}^{L_a} \right\|}^2 }{2\sigma^2}} \left| \mathbf{J}\left(\mathbf{n}^{L_a}\right) \right|\prod_{i=a}^{n/k} \prod_{j=1}^{L} \mathrm{d}n'_{i,j} \\
					&= \overbrace{\int \cdots \int}^{L_a}_{n'_{a,1} \leq -\frac{\left\| \mathbf{v}^{L_a} \right\|}{2}} \frac{1}{{\left(\sqrt{2\pi}\sigma\right)}^{L_a}} \mathrm{e}^{-\frac{{\left\| \mathbf{n}^{L_a} \right\|}^2}{2\sigma^2}} \prod_{i=a}^{n/k} \prod_{j=1}^{L} \mathrm{d}n'_{i,j} \\
					&= \int_{-\infty}^{-\frac{\left\| \mathbf{v}^{L_a} \right\|}{2}} \frac{1}{{\sqrt{2\pi}\sigma}} \mathrm{e}^{-\frac{{\left( n'_{a,1} \right)}^2}{2\sigma^2}} \mathrm{d}n'_{a,1} = Q\left(\frac{ \left\| \mathbf{v}^{L_a} \right\| }{2\sigma}\right) \text{.}
				\end{split}
			\end{equation}
			
			\section{Proof of Lemma \ref{lemma2}} \label{proofray}
			The central problem is to solve out the integral of $h$ in the RHS of (\ref{eqreplace}). 
			Recall that $h$ follows Rayleigh distribution, whose PDF is given as	
			\begin{equation} \label{eq6}
				g_{_1}\left(h\right) = \frac{2h}{\Omega} \mathrm{exp} \left( \frac{-h^2}{\Omega} \right) \text{.}
			\end{equation}	
		
			Insert (\ref{eq6}) into the RHS of (\ref{eqreplace}), the integral is transformed as
			\begin{equation}
				\begin{split}
					& \int_{h} {\mathrm{exp}}{\left(-\frac{{h^2}{u^2}}{{8}{\sigma^2}{\sin^2}\theta}\right)} g_{_1}(h)\mathrm{d}h \\
					&= \int_{0}^{+\infty} {\mathrm{exp}}{\left(-\frac{{h^2}{u^2}}{{8}{\sigma^2}{\sin^2}\theta}\right)} \cdot \frac{2h}{\Omega} \mathrm{exp} \left( \frac{-h^2}{\Omega} \right)\mathrm{d}h \\
					& \xlongequal{z = \frac{u^2}{8{{\sigma}^{2}}{{\sin}^{2}}\theta}} \frac{1}{\Omega} \int_{0}^{+\infty} \mathrm{exp}\left[- \left( z+\frac{1}{\Omega} \right) {h^2}\right] \mathrm{d}{h^2} \\
					&=\frac{1}{z\Omega+1} =\frac{8{\sigma^2}{\sin^2}\theta}{{\Omega}u^2+8{\sigma^2}{\sin^2}\theta} \text{,}
				\end{split}
			\end{equation}
			with which the RHS of (\ref{eqreplace}) becomes
			\begin{equation} \label{eq64}
				\begin{split}
					& \sum_{u} p_{_U}(u) \int_{h} {\mathrm{exp}}{\left(-\frac{{h^2}{u^2}}{{8}{\sigma^2}{\sin^2}\theta}\right)} g_1(h)\mathrm{d}h\\
					&= \sum_{u} p_{_U}(u) \cdot \frac{8{\sigma^2}{\sin^2}\theta}{{\Omega}u^2+8{\sigma^2}{\sin^2}\theta} \text{.}
				\end{split}
			\end{equation} 
			
			Recall that $U = U_{a,1} = \operatorname{f}(\mathbf{x}_{a,1}(\mathbf{M})) - \operatorname{f}(\mathbf{x}_{a,1}(\mathbf{M'}))$. Appendix \ref{property} draws the conclusion that both $\operatorname{f}(\mathbf{x}_{a,1}(\mathbf{M}))$ and $\operatorname{f}(\mathbf{x}_{a,1}(\mathbf{M'}))$ follow the uniform distribution over $[0,2^c-1]$, i.d.d. As a result, (\ref{eq64}) extends to
			\begin{equation}
				\begin{split}
					&\sum_{u} p_{_U}(u) \cdot \frac{8{\sigma^2}{\sin^2}\theta}{{\Omega}u^2+8{\sigma^2}{\sin^2}\theta} \\
					&\xlongequal[Y = \operatorname{f}(\mathbf{x}_{a,1}(\mathbf{M'}))]{X = \operatorname{f}(\mathbf{x}_{a,1}(\mathbf{M}))} \sum_{i \in \Psi} \sum_{j \in \Psi} p_{_X}(i)p_{_Y}(j) \frac{8{\sigma}^2{\sin}^2{\theta}}{{\Omega}(i-j)^2+8{\sigma}^2{\sin}^2{\theta}} \\
					&= \sum_{i \in \Psi} \sum_{j \in \Psi} 2^{-2c} \frac{8{\sigma}^2{\sin}^2{\theta}}{{\Omega}(i-j)^2+8{\sigma}^2{\sin}^2{\theta}} \text{.}
				\end{split}
			\end{equation}
			
			\section{Proof Of Monotony} \label{monotony}
			\begin{equation} \label{eq66}
				\small
				\mathcal{F}_{\text{Rayleigh}}(\theta\text{;} \sigma\text{,}L_a) =
				{\left( \sum_{i \in \Psi} \sum_{j \in \Psi} 2^{-2c} \frac{8{\sigma}^2{\sin}^2{\theta}}{{\Omega}(i-j)^2+8{\sigma}^2{\sin}^2{\theta}} \right)}^{L_a} \text{.}
			\end{equation}
			
			The monotony of $\mathcal{F}_{\text{Rayleigh}}(\theta\text{;} \sigma\text{,}L_a)$ equals to the monotony of
			\begin{equation}
				F_{\text{Rayleigh}}(\theta) \triangleq \frac{8{\sigma}^2{\sin}^2{\theta}}{{\Omega}(i-j)^2+8{\sigma}^2{\sin}^2{\theta}}\text{,}
			\end{equation}
			where $\theta \in [0,\frac{\pi}{2}]$.
			
			Denoting $8{\sigma}^2 = b$ and ${\Omega}(i-j)^2 = k$, the derivative of $F(\theta)$ is given by
			\begin{equation} \label{derivation}
				\frac{\mathrm{d} F_{\text{Rayleigh}}(\theta)}{\mathrm{d} \theta} = 2kb \frac{\mathrm{sin}\theta \mathrm{cos}\theta}{{(k+\mathrm{sin}^2\theta)}^2} \geq 0 \text{,}
			\end{equation}
			which indicates the monotonically increasing property of function $F_{\text{Rayleigh}}(\theta)$.
			
			As a consequence, $\mathcal{F}_{\text{Rayleigh}}(\theta\text{;} \sigma\text{,}L_a)$ is a monotonically increasing function.
			
			\begin{equation} \label{eqq50}
				\begin{split}
					& \mathcal{F}_{\text{Nakagami}}(\theta\text{;} \sigma\text{,}L_a) \\
					&= {\left( \sum_{i \in \Psi} \sum_{j \in \Psi} 2^{-2c}
						{\left(\frac{8m{\sigma}^2{\sin}^2{\theta}}{{\Omega}(i-j)^2+8m{\sigma}^2{\sin}^2{\theta}}\right)}^m \right)}^{L_a} \text{.}
				\end{split}
			\end{equation}	
			
			Because $m \geq 0.5$, the monotony of $\mathcal{F}_{\text{Nakagami}}(\theta\text{;} \sigma\text{,}L_a)$ equals to the monotony of 
			\begin{equation}
				F_{\text{Nakagami}}(\theta) \triangleq \frac{8m{\sigma}^2{\sin}^2{\theta}}{{\Omega}(i-j)^2+8m{\sigma}^2{\sin}^2{\theta}} \textbf{,}
			\end{equation}
			where $\theta \in [0,\frac{\pi}{2}]$.
			
			Similar to the proof for $\mathcal{F}_{\text{Rayleigh}}(\theta\text{;} \sigma\text{,}L_a)$, we denote that $8m{\sigma}^2 = b$ and ${\Omega}(i-j)^2 = k$. Since the result is the same as (\ref{derivation}), $\mathcal{F}_{\text{Nakagami}}(\theta\text{;} \sigma\text{,}L_a)$ is also a monotonically increasing function.
						
			\begin{equation} \label{eqq62}
				\small
				\begin{split}
					& \mathcal{F}_{\text{Rician}}(\theta\text{;} \sigma\text{,}L_a) \\
					&= \Bigg( \sum_{i \in \Psi} \sum_{j \in \Psi} 2^{-2c} \frac{8(K+1){\sigma}^2{\sin}^2{\theta}}{{\Omega}(i-j)^2+8(K+1){\sigma}^2{\sin}^2{\theta}}  \\
					& \qquad \cdot {\mathrm{exp} \left( \frac{8K(K+1){\sigma}^2{\sin}^2{\theta}}{{\Omega}(i-j)^2+8(K+1){\sigma}^2{\sin}^2{\theta}} -K\right) \Bigg)}^{L_a} \text{,}
				\end{split}
			\end{equation}
			
			Because exponential function is monotonically increasing, the monotony of $\mathcal{F}_{\text{Rician}}(\theta\text{;} \sigma\text{,}L_a)$ equals to the monotony of
			\begin{equation}
				F_{\text{Rician}}(\theta) \triangleq \frac{8(K+1){\sigma}^2{\sin}^2{\theta}}{{\Omega}(i-j)^2+8(K+1){\sigma}^2{\sin}^2{\theta}}\text{.}
			\end{equation}
			where $\theta \in [0,\frac{\pi}{2}]$.
			
			Similar to the proof for $\mathcal{F}_{\text{Rayleigh}}(\theta\text{;} \sigma\text{,}L_a)$, we denote $8(K+1){\sigma}^2 = b$ and ${\Omega}(i-j)^2 = k$. As the result is the same as $(\ref{derivation}), \mathcal{F}_{\text{Rician}}(\theta\text{;} \sigma\text{,}L_a)$ is also a monotonically increasing function.
			
			\section{Proof of Theorem \ref{theorem2}} \label{proofnaka}
			
			For Spinal codes over the Nakagami-m fading channel, the only difference is to replace Rayleigh distribution with Nakagami-m distribution. The PDF Nakagami-m distribution is given by
			\begin{equation} \label{eq46}
				g_{_2}(h) = \frac{2m^m}{\Gamma(m)\Omega^m} h^{2m-1} \text{exp} \left( \frac{-mh^2}{\Omega} \right) \text{,}
			\end{equation}
			where $h$ is the channel fading parameter, $m \geq 0.5$ and the Gamma function is defined as	
			\begin{equation} \label{eq47}
				\Gamma(m) = {\int_{0}^{+\infty}} \mathrm{e}^{-t} t^{m-1} \mathrm{d}t \text{.}
			\end{equation}
			
			Substitute (\ref{eq46}) into the integral of (\ref{eqreplace}) and denote $z = \frac{u^2}{8{{\sigma}^{2}}{{\sin}^{2}}\theta}$, after which the integral of $h$ is transformed as
			\begin{equation}
				\begin{split} \label{eq48}
					& \int_{h} {\mathrm{exp}}{\left(-\frac{{h^2}{u^2}}{{8}{\sigma^2}{\sin^2}\theta}\right)} g_{_2}(h)\mathrm{d}h \\
					&= \int_{0}^{+\infty} \mathrm{e}^{-{z}{h^2}} \cdot \frac{2m^m}{\Gamma(m)\Omega^m} h^{2m-1} \mathrm{e}^ {-\frac{mh^2}{\Omega}} \mathrm{d}h\\
					& \xlongequal[h = \sqrt{\frac{t}{z+\frac{m}{\Omega}}}]{t = \left( z+\frac{m}{\Omega} \right)h^2} \frac{m^m}{\Gamma(m){(z\Omega+m)}^m} {\int_{0}^{+\infty}} \mathrm{e}^{-t} t^{m-1} \mathrm{d}t \\
					&= \frac{m^m}{{(z\Omega+m)}^m}  \text{.}
				\end{split}
			\end{equation}
			
			As mentioned above, $z = \frac{u^2}{8{{\sigma}^{2}}{{\sin}^{2}}\theta}$, substitute $z$ into (\ref{eq48}):
			\begin{equation} \label{eq49}
				\begin{split}
					\frac{m^m}{{(z\Omega+m)}^m} = {\left(\frac{8m{\sigma}^2{\sin}^2{\theta}}{{\Omega}u^2+8m{\sigma}^2{\sin}^2{\theta}}\right)}^m \text{.}
				\end{split}
			\end{equation}
			
			Then the process is similar to Theorem \ref{theorem1}, by denoting that
			\begin{equation} \label{eq50}
				\begin{split}
					& \mathcal{F}_{\text{Nakagami}}(\theta\text{;} \sigma\text{,}L_a) \\
					&= {\left( \sum_{i \in \Psi} \sum_{j \in \Psi} 2^{-2c}
						{\left(\frac{8m{\sigma}^2{\sin}^2{\theta}}{{\Omega}(i-j)^2+8m{\sigma}^2{\sin}^2{\theta}}\right)}^m \right)}^{L_a} \text{,}
				\end{split}
			\end{equation}
			which is monotonically increasing proved in Appendix \ref{monotony}, and	that
			\begin{equation} \label{eq51}
				\mathscr{F}_{\text{Nakagami}} \left(L_a \text{,} \sigma \right) = \sum_{r=1}^{N} b_r\mathcal{F}_{\text{Nakagami}}(\theta_r \text{;} \sigma\text{,}L_a)\text{,}
			\end{equation}
			we can get the explicit bound.

			\section{Proof of Theorem \ref{theorem3}} \label{proofrice}
			
			For Spinal codes over the Rician fading channel, the only difference is to replace Rayleigh distribution with Rician distribution. The PDF of Rician distribution is given by
			\begin{equation} \label{eq56}
				\small
				g_{_3}(h) = \frac{2(K+1)h}{{\Omega}\, {\mathrm{exp} \left( K+\frac{(K+1){h^2}}{\Omega} \right)}} I_0 \left( 2\sqrt{\frac{K(K+1)}{\Omega}} h\right) \text{,}
			\end{equation}
			where $h$ is the channel fading parameter, and $I_0(\cdot)$ is the modified Bessel function of the first kind with order zero, given by
			\begin{equation} \label{eq57}
				I_0(x) = \sum_{k=0}^{+\infty} \frac{{(x^2/4)}^k}{k!\Gamma(k+1)} \text{.}
			\end{equation}

			Substitute (\ref{eq56}) into (\ref{eqreplace}) and denote $z = \frac{u^2}{8{{\sigma}^{2}}{{\sin}^{2}}\theta}$, then the integral of $h$ is transformed as
			\begin{equation}
				\small
				\begin{split} \label{eq58}
					& \int_{h} {\mathrm{exp}}{\left(-\frac{{h^2}{u^2}}{{8}{\sigma^2}{\sin^2}\theta}\right)} g_{_3}(h)\mathrm{d}h = \int_{0}^{+\infty} \mathrm{e}^{-{z}{h^2}} g_{_3}(h) \mathrm{d}h\\
					&= \frac{2(K+1)}{{\Omega} \mathrm{e}^{K}} {\int_{0}^{+\infty}} {h\mathrm{e}^{- \left( \frac{K+1}{\Omega}+z \right)h^2} I_0 \left( 2\sqrt{\frac{K(K+1)}{\Omega}}h \right) \mathrm{d}h} \\
					&= \frac{2}{\mathrm{e}^{K}} \sum_{m=0}^{+\infty} \frac{ K^m{(K+1)}^{m+1} }{m!\Gamma(m+1)\Omega^{m+1}} {\int_{0}^{+\infty}} h^{2m+1} \mathrm{e}^{- \left( \frac{K+1}{\Omega}+z \right)h^2} \mathrm{d}h \text{.}
				\end{split}
			\end{equation}
			
			Let $t = \left( z+\frac{K+1}{\Omega} \right)h^2 $, $ h = \sqrt{\frac{t}{z+\frac{K+1}{\Omega}}}$, we have
			\begin{equation}
				\begin{split} \label{eq59}
					& {\int_{0}^{+\infty}} h^{2m+1} \mathrm{e}^{- \left( \frac{K+1}{\Omega}+z \right)h^2} \mathrm{d}h \\
					&= \int_{0}^{+\infty} \mathrm{e}^{-t} \cdot \frac{t^m}{2{\left( z + \frac{K+1}{\Omega} \right)}^{m+1}} \mathrm{d}t \\
					&= \frac{\Gamma(m+1)}{2{\left( z + \frac{K+1}{\Omega} \right)}^{m+1}} \text{.}
				\end{split}
			\end{equation}
			
			With (\ref{eq59}) in hand, (\ref{eq58}) can be rewritten as
			\begin{equation} \label{eq60}
				\begin{split}
					& \int_{h} {\mathrm{exp}}{\left(-\frac{{h^2}{u^2}}{{8}{\sigma^2}{\sin^2}\theta}\right)} g_{_3}(h)\mathrm{d}h \\
					&= \frac{1}{K\mathrm{e}^{K}} \cdot \frac{K(K+1)}{{\Omega}z+K+1} \cdot \sum_{m=0}^{+\infty} \frac{1}{m!}{\left( \frac{K(K+1)}{{\Omega}z+K+1} \right)}^{m}  \\
					&= \frac{1}{K\mathrm{e}^{K}} \cdot \frac{K(K+1)}{{\Omega}z+K+1} \cdot \mathrm{exp}\left( \frac{K(K+1)}{{\Omega}z+K+1}  \right) \text{.}
				\end{split}
			\end{equation}
			
			Now that $z = \frac{u^2}{8{{\sigma}^{2}}{{\sin}^{2}}\theta}$, (\ref{eq60}) comes into
			\begin{equation}
				\small
				\begin{split}
					& \frac{1}{K\mathrm{e}^{K}} \cdot \frac{K(K+1)}{{\Omega}z+K+1} \cdot \mathrm{exp}\left( \frac{K(K+1)}{{\Omega}z+K+1}  \right) \\
					&= \frac{8\mathrm{e}^{-K}(K+1){\sigma^2}{\sin^2}\theta}{{\Omega}u^2+8(K+1){\sigma^2}{\sin^2}\theta} \mathrm{exp} \left( \frac{8K(K+1){\sigma^2}{\sin^2}\theta}{{\Omega}u^2+8(K+1){\sigma^2}{\sin^2}\theta} \right) \text{.}
				\end{split}
			\end{equation}
			
			Then the process is similar to Theorem \ref{theorem1}. By denoting that
			\begin{equation} \label{eq62}
				\small
				\begin{split}
					& \mathcal{F}_{\text{Rician}}(\theta\text{;} \sigma\text{,}L_a) \\
					&= \Bigg( \sum_{i \in \Psi} \sum_{j \in \Psi} 2^{-2c} \frac{8(K+1){\sigma}^2{\sin}^2{\theta}}{{\Omega}(i-j)^2+8(K+1){\sigma}^2{\sin}^2{\theta}}  \\
					& \qquad \cdot {\mathrm{exp} \left( \frac{8K(K+1){\sigma}^2{\sin}^2{\theta}}{{\Omega}(i-j)^2+8(K+1){\sigma}^2{\sin}^2{\theta}} -K\right) \Bigg)}^{L_a} \text{,}
				\end{split}
			\end{equation}
			which is monotonically increasing proved in Appendix \ref{monotony}, and	that	
			\begin{equation} \label{eq63}
				\mathscr{F}_{\text{Rician}} \left(L_a \text{,} \sigma \right) = \sum_{r=1}^{N} b_r\mathcal{F}_{\text{Rician}}(\theta_r \text{;} \sigma\text{,}L_a) \text{,}
			\end{equation}
			we can get the explicit bound.
		\end{appendices}
		
	\end{document}